\documentclass[final]{IEEEtran}
\usepackage{array}
\newcolumntype{P}[1]{>{\centering\arraybackslash}p{#1}}
\newcolumntype{M}[1]{>{\centering\arraybackslash}m{#1}}
\usepackage{amsthm,amssymb,amsmath,graphicx,multirow,color,amsfonts}%
\usepackage[update,prepend]{epstopdf}
\usepackage[latin1]{inputenc}
\usepackage{tikz}
\usepackage{bbm} 
\usepackage{pdfpages}
\usepackage{subfig}
\usepackage{comment}
\usepackage{stfloats}
\setlength{\textheight}{9.2in}

\captionsetup{font=small}

\usepackage{multicol}

\usepackage{cite}
\usepackage[justification=centering]{caption}
\usepackage{textcomp}
\usepackage{psfrag}
\usepackage{arydshln}
\usepackage{url}
\usepackage{soul}
\usepackage{graphicx,color}
\usepackage[nolist]{acronym}
\usepackage{algorithm,algorithmic} 

\usepackage{mathtools,lipsum}
\usepackage{cuted}
\setlength\stripsep{3pt plus 1pt minus 1pt}

\usepackage{amsmath}
\newtheorem{proposition}{Proposition}  
\usepackage{mathrsfs}


\usepackage[capitalise]{cleveref}
\Crefname{equation}{Eq.\!}{Eqs.\!}
\Crefname{figure}{Fig.\!}{Figs.\!}
\Crefname{tabular}{Tab.\!}{Tabs.\!}
\Crefname{section}{Section\!}{Sections.\!}


\def\nb0{{\mathbf{0}}}
\def\nb1{{\mathbf{1}}}







\newtheorem{lemma}{Lemma}

\newtheorem{definition}{Definition}

\newtheorem{theorem}{Theorem}


%










\begin{document}
\graphicspath{{./Figures/}}
	\begin{acronym}

\acro{5G-NR}{5G New Radio}
\acro{3GPP}{3rd Generation Partnership Project}
\acro{ABS}{aerial base station}
\acro{AC}{address coding}
\acro{ACF}{autocorrelation function}
\acro{ACR}{autocorrelation receiver}
\acro{ADC}{analog-to-digital converter}
\acrodef{aic}[AIC]{Analog-to-Information Converter}     
\acro{AIC}[AIC]{Akaike information criterion}
\acro{aric}[ARIC]{asymmetric restricted isometry constant}
\acro{arip}[ARIP]{asymmetric restricted isometry property}

\acro{ARQ}{Automatic Repeat Request}
\acro{AUB}{asymptotic union bound}
\acrodef{awgn}[AWGN]{Additive White Gaussian Noise}     
\acro{AWGN}{additive white Gaussian noise}

\acro{APSK}[PSK]{asymmetric PSK} 

\acro{waric}[AWRICs]{asymmetric weak restricted isometry constants}
\acro{warip}[AWRIP]{asymmetric weak restricted isometry property}
\acro{BCH}{Bose, Chaudhuri, and Hocquenghem}        
\acro{BCHC}[BCHSC]{BCH based source coding}
\acro{BEP}{bit error probability}
\acro{BFC}{block fading channel}
\acro{BG}[BG]{Bernoulli-Gaussian}
\acro{BGG}{Bernoulli-Generalized Gaussian}
\acro{BPAM}{binary pulse amplitude modulation}
\acro{BPDN}{Basis Pursuit Denoising}
\acro{BPPM}{binary pulse position modulation}
\acro{BPSK}{Binary Phase Shift Keying}
\acro{BPZF}{bandpass zonal filter}
\acro{BSC}{binary symmetric channels}              
\acro{BU}[BU]{Bernoulli-uniform}
\acro{BER}{bit error rate}
\acro{BS}{base station}
\acro{BW}{BandWidth}
\acro{BLLL}{ binary log-linear learning }

\acro{CP}{Cyclic Prefix}
\acrodef{cdf}[CDF]{cumulative distribution function}   
\acro{CDF}{Cumulative Distribution Function}
\acrodef{c.d.f.}[CDF]{cumulative distribution function}
\acro{CCDF}{complementary cumulative distribution function}
\acrodef{ccdf}[CCDF]{complementary CDF}               
\acrodef{c.c.d.f.}[CCDF]{complementary cumulative distribution function}
\acro{CD}{cooperative diversity}

\acro{CDMA}{Code Division Multiple Access}
\acro{ch.f.}{characteristic function}
\acro{CIR}{channel impulse response}
\acro{cosamp}[CoSaMP]{compressive sampling matching pursuit}
\acro{CR}{cognitive radio}
\acro{cs}[CS]{compressed sensing}                   
\acrodef{cscapital}[CS]{Compressed sensing} 
\acrodef{CS}[CS]{compressed sensing}
\acro{CSI}{channel state information}
\acro{CCSDS}{consultative committee for space data systems}
\acro{CC}{convolutional coding}
\acro{Covid19}[COVID-19]{Coronavirus disease}

\acro{DAA}{detect and avoid}
\acro{DAB}{digital audio broadcasting}
\acro{DCT}{discrete cosine transform}
\acro{dft}[DFT]{discrete Fourier transform}
\acro{DR}{distortion-rate}
\acro{DS}{direct sequence}
\acro{DS-SS}{direct-sequence spread-spectrum}
\acro{DTR}{differential transmitted-reference}
\acro{DVB-H}{digital video broadcasting\,--\,handheld}
\acro{DVB-T}{digital video broadcasting\,--\,terrestrial}
\acro{DL}{DownLink}
\acro{DSSS}{Direct Sequence Spread Spectrum}
\acro{DFT-s-OFDM}{Discrete Fourier Transform-spread-Orthogonal Frequency Division Multiplexing}
\acro{DAS}{Distributed Antenna System}
\acro{DNA}{DeoxyriboNucleic Acid}

\acro{EC}{European Commission}
\acro{EED}[EED]{exact eigenvalues distribution}
\acro{EIRP}{Equivalent Isotropically Radiated Power}
\acro{ELP}{equivalent low-pass}
\acro{eMBB}{Enhanced Mobile Broadband}
\acro{EMF}{ElectroMagnetic Field}
\acro{EU}{European union}
\acro{EI}{Exposure Index}
\acro{eICIC}{enhanced Inter-Cell Interference Coordination}

\acro{FC}[FC]{fusion center}
\acro{FCC}{Federal Communications Commission}
\acro{FEC}{forward error correction}
\acro{FFT}{fast Fourier transform}
\acro{FH}{frequency-hopping}
\acro{FH-SS}{frequency-hopping spread-spectrum}
\acrodef{FS}{Frame synchronization}
\acro{FSsmall}[FS]{frame synchronization}  
\acro{FDMA}{Frequency Division Multiple Access}

\acro{GA}{Gaussian approximation}
\acro{GF}{Galois field }
\acro{GG}{Generalized-Gaussian}
\acro{GIC}[GIC]{generalized information criterion}
\acro{GLRT}{generalized likelihood ratio test}
\acro{GPS}{Global Positioning System}
\acro{GMSK}{Gaussian Minimum Shift Keying}
\acro{GSMA}{Global System for Mobile communications Association}
\acro{GS}{ground station}
\acro{GMG}{ Grid-connected MicroGeneration}

\acro{HAP}{high altitude platform}
\acro{HetNet}{Heterogeneous network}

\acro{IDR}{information distortion-rate}
\acro{IFFT}{inverse fast Fourier transform}
\acro{iht}[IHT]{iterative hard thresholding}
\acro{i.i.d.}{independent, identically distributed}
\acro{IoT}{Internet of Things}                      
\acro{IR}{impulse radio}
\acro{lric}[LRIC]{lower restricted isometry constant}
\acro{lrict}[LRICt]{lower restricted isometry constant threshold}
\acro{ISI}{intersymbol interference}
\acro{ITU}{International Telecommunication Union}
\acro{ICNIRP}{International Commission on Non-Ionizing Radiation Protection}
\acro{IEEE}{Institute of Electrical and Electronics Engineers}
\acro{ICES}{IEEE international committee on electromagnetic safety}
\acro{IEC}{International Electrotechnical Commission}
\acro{IARC}{International Agency on Research on Cancer}
\acro{IS-95}{Interim Standard 95}

\acro{KPI}{Key Performance Indicator}

\acro{LEO}{low earth orbit}
\acro{LF}{likelihood function}
\acro{LLF}{log-likelihood function}
\acro{LLR}{log-likelihood ratio}
\acro{LLRT}{log-likelihood ratio test}
\acro{LoS}{Line-of-Sight}
\acro{LRT}{likelihood ratio test}
\acro{wlric}[LWRIC]{lower weak restricted isometry constant}
\acro{wlrict}[LWRICt]{LWRIC threshold}
\acro{LPWAN}{Low Power Wide Area Network}
\acro{LoRaWAN}{Low power long Range Wide Area Network}
\acro{NLoS}{Non-Line-of-Sight}
\acro{LiFi}[Li-Fi]{light-fidelity}
 \acro{LED}{light emitting diode}
 \acro{LABS}{LoS transmission with each ABS}
 \acro{NLABS}{NLoS transmission with each ABS}

\acro{MB}{multiband}
\acro{MC}{macro cell}
\acro{MDS}{mixed distributed source}
\acro{MF}{matched filter}
\acro{m.g.f.}{moment generating function}
\acro{MI}{mutual information}
\acro{MIMO}{Multiple-Input Multiple-Output}
\acro{MISO}{multiple-input single-output}
\acrodef{maxs}[MJSO]{maximum joint support cardinality}                       
\acro{ML}[ML]{maximum likelihood}
\acro{MMSE}{minimum mean-square error}
\acro{MMV}{multiple measurement vectors}
\acrodef{MOS}{model order selection}
\acro{M-PSK}[${M}$-PSK]{$M$-ary phase shift keying}                       
\acro{M-APSK}[${M}$-PSK]{$M$-ary asymmetric PSK} 
\acro{MP}{ multi-period}
\acro{MINLP}{mixed integer non-linear programming}

\acro{M-QAM}[$M$-QAM]{$M$-ary quadrature amplitude modulation}
\acro{MRC}{maximal ratio combiner}                  
\acro{maxs}[MSO]{maximum sparsity order}                                      
\acro{M2M}{Machine-to-Machine}                                                
\acro{MUI}{multi-user interference}
\acro{mMTC}{massive Machine Type Communications}      
\acro{mm-Wave}{millimeter-wave}
\acro{MP}{mobile phone}
\acro{MPE}{maximum permissible exposure}
\acro{MAC}{media access control}
\acro{NB}{narrowband}
\acro{NBI}{narrowband interference}
\acro{NLA}{nonlinear sparse approximation}
\acro{NLOS}{Non-Line of Sight}
\acro{NTIA}{National Telecommunications and Information Administration}
\acro{NTP}{National Toxicology Program}
\acro{NHS}{National Health Service}

\acro{LOS}{Line of Sight}

\acro{OC}{optimum combining}                             
\acro{OC}{optimum combining}
\acro{ODE}{operational distortion-energy}
\acro{ODR}{operational distortion-rate}
\acro{OFDM}{Orthogonal Frequency-Division Multiplexing}
\acro{omp}[OMP]{orthogonal matching pursuit}
\acro{OSMP}[OSMP]{orthogonal subspace matching pursuit}
\acro{OQAM}{offset quadrature amplitude modulation}
\acro{OQPSK}{offset QPSK}
\acro{OFDMA}{Orthogonal Frequency-division Multiple Access}
\acro{OPEX}{Operating Expenditures}
\acro{OQPSK/PM}{OQPSK with phase modulation}

\acro{PAM}{pulse amplitude modulation}
\acro{PAR}{peak-to-average ratio}
\acrodef{pdf}[PDF]{probability density function}                      
\acro{PDF}{probability density function}
\acrodef{p.d.f.}[PDF]{probability distribution function}
\acro{PDP}{power dispersion profile}
\acro{PMF}{probability mass function}                             
\acrodef{p.m.f.}[PMF]{probability mass function}
\acro{PN}{pseudo-noise}
\acro{PPM}{pulse position modulation}
\acro{PRake}{Partial Rake}
\acro{PSD}{power spectral density}
\acro{PSEP}{pairwise synchronization error probability}
\acro{PSK}{phase shift keying}
\acro{PD}{power density}
\acro{8-PSK}[$8$-PSK]{$8$-phase shift keying}
\acro{PPP}{Poisson point process}
\acro{PCP}{Poisson cluster process}
 
\acro{FSK}{Frequency Shift Keying}

\acro{QAM}{Quadrature Amplitude Modulation}
\acro{QPSK}{Quadrature Phase Shift Keying}
\acro{OQPSK/PM}{OQPSK with phase modulator }

\acro{RD}[RD]{raw data}
\acro{RDL}{"random data limit"}
\acro{ric}[RIC]{restricted isometry constant}
\acro{rict}[RICt]{restricted isometry constant threshold}
\acro{rip}[RIP]{restricted isometry property}
\acro{ROC}{receiver operating characteristic}
\acro{rq}[RQ]{Raleigh quotient}
\acro{RS}[RS]{Reed-Solomon}
\acro{RSC}[RSSC]{RS based source coding}
\acro{r.v.}{random variable}                               
\acro{R.V.}{random vector}
\acro{RMS}{root mean square}
\acro{RFR}{radiofrequency radiation}
\acro{RIS}{Reconfigurable Intelligent Surface}
\acro{RNA}{RiboNucleic Acid}
\acro{RRM}{Radio Resource Management}
\acro{RUE}{reference user equipments}
\acro{RAT}{radio access technology}
\acro{RB}{resource block}

\acro{SA}[SA-Music]{subspace-augmented MUSIC with OSMP}
\acro{SC}{small cell}
\acro{SCBSES}[SCBSES]{Source Compression Based Syndrome Encoding Scheme}
\acro{SCM}{sample covariance matrix}
\acro{SEP}{symbol error probability}
\acro{SG}[SG]{sparse-land Gaussian model}
\acro{SIMO}{single-input multiple-output}
\acro{SINR}{signal-to-interference plus noise ratio}
\acro{SIR}{signal-to-interference ratio}
\acro{SISO}{Single-Input Single-Output}
\acro{SMV}{single measurement vector}
\acro{SNR}[\textrm{SNR}]{signal-to-noise ratio} 
\acro{sp}[SP]{subspace pursuit}
\acro{SS}{spread spectrum}
\acro{SW}{sync word}
\acro{SAR}{specific absorption rate}
\acro{SSB}{synchronization signal block}
\acro{SR}{shrink and realign}

\acro{tUAV}{tethered Unmanned Aerial Vehicle}
\acro{TBS}{terrestrial base station}

\acro{uUAV}{untethered Unmanned Aerial Vehicle}
\acro{PDF}{probability density functions}

\acro{PL}{path-loss}

\acro{TH}{time-hopping}
\acro{ToA}{time-of-arrival}
\acro{TR}{transmitted-reference}
\acro{TW}{Tracy-Widom}
\acro{TWDT}{TW Distribution Tail}
\acro{TCM}{trellis coded modulation}
\acro{TDD}{Time-Division Duplexing}
\acro{TDMA}{Time Division Multiple Access}
\acro{Tx}{average transmit}

\acro{UAV}{Unmanned Aerial Vehicle}
\acro{uric}[URIC]{upper restricted isometry constant}
\acro{urict}[URICt]{upper restricted isometry constant threshold}
\acro{UWB}{ultrawide band}
\acro{UWBcap}[UWB]{Ultrawide band}   
\acro{URLLC}{Ultra Reliable Low Latency Communications}
         
\acro{wuric}[UWRIC]{upper weak restricted isometry constant}
\acro{wurict}[UWRICt]{UWRIC threshold}                
\acro{UE}{User Equipment}
\acro{UL}{UpLink}

\acro{WiM}[WiM]{weigh-in-motion}
\acro{WLAN}{wireless local area network}
\acro{wm}[WM]{Wishart matrix}                               
\acroplural{wm}[WM]{Wishart matrices}
\acro{WMAN}{wireless metropolitan area network}
\acro{WPAN}{wireless personal area network}
\acro{wric}[WRIC]{weak restricted isometry constant}
\acro{wrict}[WRICt]{weak restricted isometry constant thresholds}
\acro{wrip}[WRIP]{weak restricted isometry property}
\acro{WSN}{wireless sensor network}                        
\acro{WSS}{Wide-Sense Stationary}
\acro{WHO}{World Health Organization}
\acro{Wi-Fi}{Wireless Fidelity}

\acro{sss}[SpaSoSEnc]{sparse source syndrome encoding}

\acro{VLC}{Visible Light Communication}
\acro{VPN}{Virtual Private Network} 
\acro{RF}{Radio Frequency}
\acro{FSO}{Free Space Optics}
\acro{IoST}{Internet of Space Things}

\acro{GSM}{Global System for Mobile Communications}
\acro{2G}{Second-generation cellular network}
\acro{3G}{Third-generation cellular network}
\acro{4G}{Fourth-generation cellular network}
\acro{5G}{Fifth-generation cellular network}	
\acro{gNB}{next-generation Node-B Base Station}
\acro{NR}{New Radio}
\acro{UMTS}{Universal Mobile Telecommunications Service}
\acro{LTE}{Long Term Evolution}

\acro{QoS}{Quality of Service}
\end{acronym}
	
\newcommand{\SAR} {\mathrm{SAR}}
\newcommand{\WBSAR} {\mathrm{SAR}_{\mathsf{WB}}}
\newcommand{\gSAR} {\mathrm{SAR}_{10\si{\gram}}}
\newcommand{\Sab} {S_{\mathsf{ab}}}
\newcommand{\Eavg} {E_{\mathsf{avg}}}
\newcommand{\ft}{f_{\textsf{th}}}
\newcommand{\alphatf}{\alpha_{24}}


\title{THz/RF Multi-Hop Routing Throughput: Performance, Optimization, and Application}

\author{
Zhengying Lou, Baha Eddine Youcef Belmekki,~\IEEEmembership{Senior Member,~IEEE,} and Mohamed-Slim Alouini, {\em Fellow, IEEE}

\thanks{Zhengying Lou and Mohamed-Slim Alouini are with the King Abdullah
University of Science and Technology, Thuwal 23955, Saudi Arabia
(e-mail: {zhengying.lou, slim.alouini}@kaust.edu.sa). Baha Eddine
Youcef Belmekki is with the School of Engineering and Physical Sciences,
Heriot-Watt University, Edinburgh EH14 4AS, United Kingdom (e-mail:
b.belmekki@hw.ac.uk).}
\vspace{-8mm}
}

\maketitle

\begin{abstract}
\color{black} Terahertz (THz) communication offers a promising solution for high-throughput wireless systems. However, the severe path loss of THz signals raises concerns about its effectiveness compared to radio frequency (RF) communication. 
In this article, we establish the first stochastic geometry (SG)-based analytical framework for routing in THz systems.
We develop a stepwise optimization approach to maximize throughput, including power allocation, relay selection, and number of hops design. 
Analytical expressions for throughput and coverage probability are derived under the SG framework, enabling low complexity and scalable performance evaluation.
Numerical results show that the proposed stepwise-optimal routing strategies not only outperform existing SG-based methods but also approach the ideal upper bound. Moreover, we compare the throughput and coverage performance of THz and RF routing and demonstrate the applications of the proposed analytical framework and routing strategies in system parameter design and unmanned aerial vehicle networks.
\end{abstract}

\begin{IEEEkeywords}
Multi-hop routing, terahertz, stochastic geometry, throughput, radio frequency.
\end{IEEEkeywords}

\section{Introduction}
Terahertz (THz) frequency band communication is playing a pivotal role in wireless communication, addressing the ever-increasing demand for wireless data transmission \cite{faisal2020ultramassive,belmekki2023harnessing,elzanaty2021towards}. 
The THz band offers a large bandwidth and a high-frequency reuse factor, making it highly suitable for achieving ultra-high data transmission rates. In this context, the THz spectrum emerges as an ideal choice to enable the seamless exchange of real-time data, including high-definition video, sensor information, and map data \cite{sarieddeen2020next,wan2021terahertz}. Furthermore, the THz networks hold significant potential for applications requiring low-latency transmission of large data packets, such as enhancing vehicle communication systems with autonomous driving technology \cite{zaid2023aerial}, improving user experiences in augmented and virtual reality \cite{chaccour2020ruin}, and expanding the capabilities of non-terrestrial networks \cite{yuan2022secure}.

\par
{\color{black} When compared to traditional radio frequency (RF) transmission, THz communication offers an ultra-wide bandwidth, which, combined with beamforming techniques, enables high data rates, precise directional transmission, and enhanced spectral efficiency \cite{elayan2019terahertz,rappaport2013millimeter}.}
However, the propagation distance of THz signals is significantly limited by water molecule absorption \cite{chattopadhyay2015compact}. In regions with ample moisture, the effective transmission distance of RF suppresses that of THz networks. Furthermore, THz signals exhibit significant penetration and path loss, resulting in reduced signal strength compared to RF signals. 
Therefore, comparing the transmission performance of THz and RF networks is an intriguing topic.

\par
To facilitate a fair comparison, we consider the following scenario. The source node chooses either RF or THz links to transmit signals to the target node, utilizing intermediate relay nodes to circumvent obstacles and improve signal reliability \cite{akyildiz2022terahertz}. The total transmission power of the aforementioned multi-hop routing and the distance between the source and target nodes should be consistent. Throughput, which is defined as the maximum achievable rate of data transmission over a fading channel \cite{al2020throughput}, is determined by both the bandwidth and path loss. Thus, throughput is suitable for evaluating the routing performance. 
In summary, we compare the throughput performance between routing with RF transmission and routing with THz transmission under fixed transmission power and distance.

\subsection{Related Works}
{\color{black} 
Numerous studies have considered the analysis and comparison of the transmission performance of THz and RF routing. Among them, simulation-based routing and fixed-topology routing are the two most commonly adopted methods \cite{han2022terahertz,serghiou2022terahertz,boulogeorgos2020outage,farrag2021outage,bhardwaj2021performance}. 
However, the methods and frameworks also struggle to address the challenges of comparison fairness. 
{\color{black} Moreover, the conclusions drawn from these studies are often limited and not widely applicable.}

Simulation-based routing methods require a large number of testing rounds to address the randomness of the channel, with the average performance being calculated as the final result\cite{han2022terahertz}. Moreover, in each round of testing, simulations need to re-determine routing nodes and estimate link performance\cite{serghiou2022terahertz}. Therefore, evaluation through simulation is typically computationally expensive. Furthermore, results obtained through numerical simulations may lack scalability as they become inapplicable when system parameters, such as communication distance, are changed. 

Fixed topology routing methods can provide closed-form analytical solutions for low computational complexity performance evaluation \cite{boulogeorgos2020outage,farrag2021outage}. Authors in \cite{boulogeorgos2020outage} derived closed-form expressions for the outage probability of dual-hop THz systems, while \cite{farrag2021outage} proposed an optimum power allocation scheme. Considering both THz and RF relay nodes, the authors in \cite{bhardwaj2021performance} derived analytical expressions for the SNR and ergodic capacity. However, fixed topology routing is also not widely applicable because once the routing path changes, the previously obtained evaluations no longer hold. Therefore, we need a mathematical tool that can establish an analytical framework for stochastic topology, meeting the requirements of wide applicability.

Stochastic geometry (SG) is one of the most suitable mathematical tools for the analysis of large-scale networks with stochastic topology \cite{wang2021stochastic}. Specifically, relay nodes are modeled as spatial distributions to enable analyzability.
In the field of SG, there are several studies involving the analysis and comparison of network performance under different transmission frequency bands \cite{sayehvand2020interference,lou2023coverage,belbase2018coverage}. 
In \cite{sayehvand2020interference}, authors focused on analyzing the downlink coverage probability of a hybrid THz and RF network. Then, a dual-hop decode-and-forward routing protocol in a hybrid THz and RF network was proposed in \cite{lou2023coverage}. On the other hand, authors in \cite{belbase2018coverage} compared the dual-hop coverage probability of millimeter-wave and RF networks.

However, these studies restrict the number of hops to single-hop or double-hop, thereby failing to meet the requirement for comparison fairness. For instance, in medium or long-distance communication, THz communication faces more significant attenuation compared to RF due to the large distance. However, this doesn't necessarily prove that RF has an advantage over the THz frequency band in medium or long-distance communication scenarios. THz networks can mitigate the impact of single-hop attenuation by employing more hops in the routing. Therefore, it is more reasonable to compare the two transmission modes under the routing with an optimal number of hops. Based on the analysis above, it is urgent to provide a strategy that simultaneously meets the requirements of wide applicability and comparison fairness.}

\subsection{Contribution}
{\color{black}
In this paper, to provide a scalable, efficient, and fair analytical framework for performance evaluation, SG-based routing for both THz and RF networks is investigated. To the best of our knowledge, this is the first work that applies SG to study routing strategies in THz networks. The contributions of this paper are summarized as follows.}

\begin{itemize}
{\color{black} \item 
To promote comparison fairness, we design maximum throughput routing schemes separately for THz and RF networks. These schemes consist of a stepwise optimization framework, comprising power allocation, relay selection, and number of hops design strategies. Furthermore, we provide an unreachable upper bound for the routing throughput in an ideal scenario, and the throughput corresponding to the proposed stepwise-optimal routing strategy can approach this upper bound. 

\item To meet the need for wide applicability, we introduce the SG framework to derive analytical results for routing throughput and coverage probability. A stepwise-suboptimal routing strategy is further proposed to enhance tractability, whose throughput performance is proven to be a tight lower bound for that of the stepwise-routing strategy.
Among existing analyzable routing strategies, stepwise-optimal and stepwise-suboptimal strategies exhibit significantly higher throughput than long-hop and short-hop strategies commonly studied in the literature.

\item Through numerical results, we compare the throughput and coverage performance of THz and RF transmission. We also demonstrate how the analytical results of coverage probability can be applied to transmission power design. Additionally, the application of this routing strategy in unmanned aerial vehicle (UAV) networks is showcased through simple extensions.}
\end{itemize}

{\color{black}
The rest of this paper is organized as follows. The system model is presented in Section \ref{section2}. In Section \ref{section3}, a stepwise optimization approach is proposed to maximize throughput. The analytical expressions for throughput and coverage probability are derived in Section \ref{section4}. 
Section \ref{section5} shows the comprehensive numerical results. Section \ref{section6} discusses open issues and future research directions. Section \ref{section7} concludes this paper.}

\section{System Model}\label{section2}

\subsection{Network Model}
We consider a scenario where a source node communicates with a target node with the help of THz and RF devices, which serve as relay nodes to allow multi-hop routing, as shown in Fig.\ref{fig:sys}.
A transmission between two nodes is called a hop, and a routing with $K_{Q}$ hops ($K_{Q} \geq 1$) contains $K_{Q}-1$ $Q$ relay nodes, where $Q = \{  \rm{THz}, \rm{RF} \}$. Without loss of generality, we set the position of the source node at the origin and the direction from the source node to the target node as the positive $x$-axis, and we set the target node at a distance $R$ from the source (i.e., the origin). 
Given that all nodes are located on a two-dimensional plane, the locations of THz and RF relay nodes form two independent homogeneous Poisson point processes (PPPs) denoted by $\Phi_{\rm{THz}}$ and $\Phi_{\rm{RF}}$ with density $\lambda_{\rm{THz}}$ and $\lambda_{\rm{RF}}$, respectively. 
{\color{black} The network nodes are modeled as quasi-static during each transmission interval. This modeling approach reflects practical scenarios such as UAV hovering and vehicular relaying along roads, where the relative positions of nodes remain stable within the time scale of routing decisions.}
We assume that the locations and frequency bands of nodes are shared, two nodes in a hop can achieve beam alignment, that is, they are located in one other main lobe of the beam. In the simulation, we employ the flap-top antenna model with a half-power beamwidth of 10 degrees. Since the beam alignment mechanism is introduced, the noise power exceeds the interference power by an order of magnitude. Therefore, we make a reasonable assumption that there is no co-frequency interference.

\begin{figure}[t!]
	\centering	\includegraphics[width=0.9\linewidth]{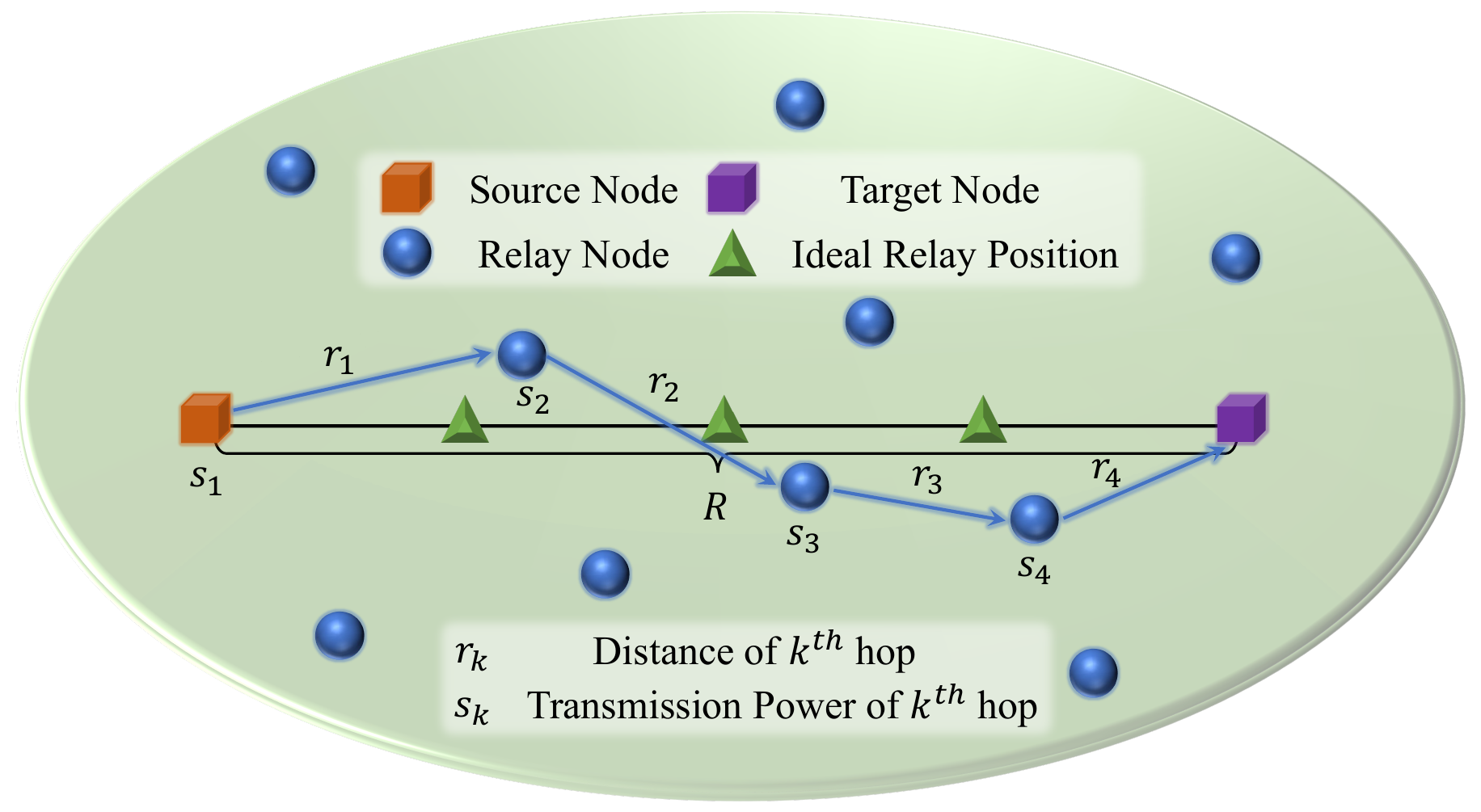}
	\caption{A scheme for maximum throughput routing.}
	\label{fig:sys}
	\vspace{-0.4cm}
\end{figure}

\subsection{THz Channel Model} 
We consider that the THz channel experiences large-scale fading and small-scale fading \cite{chaccour2019reliability}, while the \ac{SNR} of the THz channel is expressed as 
\begin{equation}\label{SNRTHz}
    {\rm{SNR}}_{{\rm{THz}}}  = \frac{S_{{\rm{THz}}} \,G_{{\rm{THz}}} \eta _{{\rm{THz}}} \, \exp\left({-\beta_{\rm{THz}} r}\right) \, \mathcal{X}_{\rm{THz}}}{\sigma_{\rm{THz}}^2\, r^2},
\end{equation}
where $S_{{\rm{THz}}}$ is the transmission power, $G_{{\rm{THz}}}$ is the antenna gain, and  $\sigma_{\rm{THz}}^2$ is the environmental noise which is in proportion to the transmission bandwidth $B_{\rm{THz}}$. The large-scale path loss is modeled as $\eta_{{\rm{THz}}} {\exp\left({-\beta_{\rm{THz}} r}\right)}/{r^2}$, where both mean addition losses $\eta _{\rm{THz}}$ and absorption coefficient $\beta_{\rm{THz}}$ are constants related to transmit frequency $\upsilon_{\rm{THz}}$ and propagation conditions \cite{olson2022coverage, jornet2011channel}.  Then, the small-scale fading $\mathcal{X}_{\rm{THz}}$ follows $\alpha-\mu$ distribution experienced with fading parameter $\alpha>0$ and normalized variance of the fading channel envelope $\mu>0$ \cite{lou2023coverage,bhardwaj2021fixed}, and is given by 
\begin{equation}
    f_{\mathcal{X}_{\rm{THz}}}(m) = \frac{\alpha \mu ^\mu m^{\frac{\alpha \mu}{2}-1}}{2\overline{m}^{\frac{\alpha \mu}{2}}\Gamma (\mu)} \exp\left ( -\mu\left ( \frac{m}{\overline{m}} \right )^{\frac{\alpha}{2}} \right ),
\end{equation}
where $ \Gamma (\mu) = \int_{0}^{\infty }t^{\mu-1}\exp{(-t)}{\rm d}t $ is theGamma function, and $\overline{m}  $ is the $\alpha$-root mean value of the fading channel envelope. The \ac{CCDF} of $\mathcal{X}_{\rm{THz}}$ is expressed as $\bar{F}_{\mathcal{X}_{\rm{THz}}}(m) = {\Gamma (\mu,(\mu (m/\overline{m} )^{\frac{\alpha}{2}})}/{\Gamma (\mu)}$, where ${\Gamma (\mu,\mu (m/\overline{m} )^{\frac{\alpha}{2}})}$ is the upper incomplete Gamma function.

\vspace{-1mm}
\subsection{RF Channel Model}
The \ac{SNR} of the RF channel in one hop, defined as a combination of distance-dependent attenuation and small-scale components \cite{sayehvand2020interference}, is expressed as
\begin{equation}\label{SNRRF}
    {\rm{SNR}}_{{\rm{RF}}} = \frac{S_{{\rm{RF}}} \,G_{\rm{RF}} \eta _{\rm{RF}} \, r^{-\beta_{\rm{RF}}} \, \mathcal{X}_{\rm{RF}}}{\sigma_{\rm{RF}}^2},
\end{equation}
where $S_{{\rm{RF}}}$ is the transmission power, $G_{{\rm{RF}}}$ is antenna gain. Then, the large-scale fading of the RF channel is denoted as $\eta _{\rm{RF}} r^{-\beta_{\rm{RF}}}$, where $\eta _{\rm{RF}}$ is a mean addition loss related to transmit frequency $\upsilon_{\rm{RF}}$ and propagation conditions\cite{UAV_SG}, $r$ is the Euclidean distance between nodes in the hop, and $\beta_{\rm{RF}}$ is the path loss exponent for RF transmission. Moreover, $\sigma_{\rm{RF}}^2$ is the environmental noise, which is proportional to the transmission bandwidth $B_{\rm{RF}}$. Moreover, the $\mathcal{X}_{\rm{RF}}$ is small-scale fading followed exponential distribution with unit mean, i.e., the PDF of ${\mathcal{X}_{\rm{RF}}}$ and the CCDF of ${\mathcal{X}_{\rm{RF}}}$ follow same distribution,  that is, $f_{\mathcal{X}_{\rm{RF}}} (m)= \bar{F}_{\mathcal{X}_{\rm{RF}}}(m)= \exp{(-m)}$. 

{\color{black}
\subsection{Problem Statement}
In this subsection, we first introduce the definition of single-hop throughput and the relay communication mode. Based on that, the total throughput is provided and serves as the optimization objective. 
To ensure the tractability of the optimization process, we must appropriately handle the random variables involved in the optimization objective. 
We then present the optimization problem that aims to maximize the total throughput.

First, the single-hop throughput is defined as the ergodic capacity from the Shannon-Hartley theorem over a fading channel \cite{sayehvand2020interference}, which is given by
\begin{equation}\label{taoQk}
  \tau_{Q,k} \left(r_k,s_k\right) = B_{Q} \log_{2}\left(1+{\rm{SNR}}_{Q,k}\left(r_k,s_k\right) \right),   
\end{equation}
where $B_{Q}$ is the bandwidth of $Q$ link, $Q = \{\rm{THz}, \rm{RF}\}$. As illustrated in Fig.\ref{fig:sys}, $r_{k}$ and  $s_{k}$ are distances and transmission power of $k^{th}$ hops, respectively. 
{\color{black} 
In terms of communication mode, the relay nodes operate in half-duplex mode and adopt single-input single-output (SISO) configurations. Moreover, each relay node uses a directional antenna array to form a narrow beam and employs a beam alignment mechanism. 
Specifically, each relay node first aligns the main lobe of the beam with the previous node and receives the data packet in the buffering area. Then, it aligns the main lobe with the next node and transmits the packet.}
Therefore, the total buffering latency $\tau_Q^T$ is the sum of the buffering latency of each hop. The total throughput is given by the ratio of packet size to total buffering latency,
\begin{IEEEeqnarray}{RCL}\label{total tau}
    \tau_Q^T
    &=& \frac{M}{\sum_{k=1}^{K_{Q}} \frac{M}{\tau_{Q,k}\left(r_k,s_k\right)}} \notag \\
    &=& \frac{1}{\sum_{k=1}^{K_{Q}} \frac{1}{\tau_{Q,k}\left(r_k,s_k\right)}}, 
\end{IEEEeqnarray}
where $M$ is the size of the package, ${M}/{\tau_{Q,k}}$ is the buffering latency of the $k^{th}$ hop, and $\tau_{Q,k} \left(r_k,s_k\right)$ is defined in (\ref{taoQk}). Consider a large data package scenario in THz transmission, buffering latency is dominant compared to decoding latency and propagation latency.

{\color{black} To make the process of total throughput $\tau_Q^T$ maximization tractable, there are two random variables that need to be carefully handled. The first is the small-scale fading of the channel. In large-scale networks, acquiring channel state information (CSI) for all links incurs significant overhead. Since the instantaneous small-scale fading is not related to parameters such as transmission power or distance of each hop, we take the expectation of the expression in (\ref{SNRTHz}) or (\ref{SNRRF}) with respect to $\mathcal{X}_Q$, $Q = \{\rm{THz}, \rm{RF}\}$. Therefore, the average throughput for the $k^{th}$ hop in THz and RF networks are expressed as,
\begin{equation}\label{average tao}
  \overline{\tau}_{Q,k} \left(r_k,s_k\right) = B_Q \log_{2}\left(1+\overline{\rm{SNR}}_{Q,k}\left(r_k,s_k\right) \right),
\end{equation}
and 
\begin{equation}\label{SNR THz}
  \overline{\rm{SNR}}_{{\rm{THz}},k}\left(r_k,s_k\right) = \frac{s_{k} \,G_{{\rm{THz}}} \, \eta _{{\rm{THz}}}  \exp\left({-\beta_{\rm{THz}} r_k}\right)}{\sigma_{\rm{THz}}^2\, r_k^2} ,
\end{equation}
\begin{equation}\label{SNR RF}
  \overline{\rm{SNR}}_{{\rm{RF}},k}\left(r_k,s_k\right) =\frac{s_{k} \,G_{{\rm{RF}}} \, \eta _{{\rm{RF}}}  r^{-\beta_{\rm{RF}}} }{\sigma_{\rm{RF}}^2}.
\end{equation}}

{\color{black}The second random variable is the distance of each hop. Since the distribution of relay nodes obeys a PPP, the distance of one hop, that is, the distance between two nodes generally follows a distance distribution. This particular distance distribution will be discussed in subsection~\ref{single-hop distance}. However, taking account of distance distributions makes the results of maximum throughput routing extremely intricate. 
{\color{black} Therefore, we consider an ideal scenario in which there are no obstacles, and the selected relay nodes are all located on the segment between the source node and the target node. This ideal scenario closely mirrors vehicular communications, usually, the vehicles are on highway lanes where the inter-lane distance is very small, therefore, the source, relays, and target node are aligned. Moreover, the ideal relay positions correspond to the locations that maximize the throughput and serve as a theoretical upper bound for routing performance.} 
Since this segment is the shortest relay path, selecting the relay nodes on this line segment can shorten the distance of each hop, thus increasing throughput. The above simplification scheme can be expressed by the following constraint: $\sum_{k=1}^{{K_{Q}}} r_k  \geqslant  R.$} 
Finally, the optimization problem in $Q$ network is stated as follows
\vspace{-1mm}
\begin{subequations} 
	\begin{alignat}{2}
	\mathscr{P}_0: \; & \underset{ {{K_{Q}},r_1,\dots,r_{K_{Q}},s_1,\dots,s_{K_{Q}}}}{\textrm{maximize}}  &\; & \overline{\tau}_Q^T =  \frac{1}{\sum_{k=1}^{K_{Q}} \frac{1}{\overline{\tau}_{Q ,k}\left(r_k,s_k\right)}}, \label{opt0}\\
		&\quad \quad \; \textrm{subject to:}    &      & \sum_{k=1}^{{K_{Q}}} r_k  \geqslant  R, \label{st:constraint0-1}\\
		&    &   &   \sum_{k=1}^{{K_{Q}}} s_k \leqslant S_Q, \label{st:constraint0-2}
	\end{alignat}
\end{subequations}
where ${K_{Q}}$ is the number of hops, (\ref{st:constraint0-1}) and (\ref{st:constraint0-2}) respectively represent the limits of total communication distance $R$ and total transmission power $S_Q$. $R$ and $S_Q$ are predefined constants.
{\color{black} 
Noted that constraint (\ref{st:constraint0-2}) is imposed to reflect energy limitations in real-world systems, such as UAV-assisted networks or vehicular communications.}
{\color{black}It is important to note that the optimization problem $\mathscr{P}_0$ is non-convex. 
Specifically, the average throughput of each hop, $\overline{\tau}_{Q,k}(r_k, s_k)$, is a logarithmic function of the $\overline{\textrm{SNR}}_{Q,k}\left(r_k,s_k\right)$, which itself is not jointly concave in $(r_k, s_k)$.
Moreover, hop number $K_Q$ is an integer variable. Therefore, the problem is a mixed-integer non-linear programming (MINLP) problem and cannot be solved using conventional convex optimization tools.
{\color{black} To address this challenge, we decompose the optimization process into three steps, which will be introduced in Sec.~\ref{section3}.}}}

\section{Maximum Throughput Routing Strategies}\label{section3}
{\color{black}In this section, we adopt a stepwise optimization approach, in which each sub-problem (power allocation, relay selection, and number of hops design) is solved optimally to maximize the total throughput.}
Note that the specific distribution of relay nodes is ignored in this section. 
Therefore, we have obtained the routing strategies that yield the highest throughput when the ideal relay positions are taken into account.
\vspace{-1mm}

\subsection{Power Allocation}
To solve $\mathscr{P}_0$, we first fix the number of hops ${K_{Q}}$ and distances of hops $r_1,r_2,\dots,r_{K_{Q}}$, only adjust the transmission power $s_1,s_2,\dots,s_{K_{Q}}$ of each hop to maximize the total throughput. 
Therefore, $\mathscr{P}_0$ is simplified into $\mathscr{P}_1^{Q}$, then
\vspace{-1mm}
\begin{subequations} 
	\begin{alignat}{2}
		\mathscr{P}_1^{Q}:\quad &\underset{ {s_1,s_2,\dots,s_{K_{Q}}}}{\text{maximize}}  &\quad& \overline{\tau}_Q^T = \frac{1}{\sum_{k=1}^{K_{Q}} \frac{1}{\overline{\tau}_{Q,k}\left(r_k,s_k\right)}}, \label{opt1}\\
		\; \; &\textrm{subject to:}    &\quad&   \sum_{k=1}^{{K_{Q}}} s_k \leqslant S_Q. \label{st:constraint1-1}
	\end{alignat}
\end{subequations}

Furthermore, to avoid transcendental equations caused by the logarithm term in (\ref{taoQk}), we focus on large SNR and small SNR cases. 
These two cases are sufficient to represent the majority of situations, because the value of SNR varies significantly with distance, especially for THz networks. 
The value of  SNR is often observed to be either $\gg 1$ or $\ll 1$, when solving the above sub-problem.  
Then, the power allocation strategy in THz and RF networks is given in the following proposition.

\begin{proposition}[Power Allocation Strategy] \label{prop1} 
Given the number of hops is ${K_{Q}}$ and the distances of hops are $r_1,r_2,\dots,r_{K_{Q}}$, the unique optimal power allocation strategy $s_1^*,s_2^*,\dots,s_{K_{Q}}^*$ of problem $\mathscr{P}_1^{Q}$ is,
\begin{IEEEeqnarray}{RCL}
    \left\{\!\!
 	 \begin{array}{ll} \!\!\!\!
  &\!\!\!\! \log_{2}\left({\overline{\rm{SNR}}_{Q,k}\left(r_k,s_k^*\right)}\right) \sqrt{s_k^*}  \\\!\!=\!&\!\! \!\!{\log_{2}\left({\overline{\rm{SNR}}_{Q,j}\left(r_j,s_j^*\right)}\right)} \sqrt{s_j^*}, \ \forall j,k,  \  {{\rm{for \ SNR}}\gg 1}\\\!\!&\!\!\!\!{\overline{\rm{SNR}}_{Q,k}\left(r_k,s_k^*\right)} \, s_k^*  \\\!\!=\!&\!\!\!\!{\overline{\rm{SNR}}_{Q,j}\left(r_j,s_j^*\right)} \, s_j^*, \ \forall j,k, 
 \ \ \ \ \ \ \  \ \ \  \; \, {{\rm{for \ SNR}}\ll 1}
	 \end{array} 
	\right.  ,  \IEEEeqnarraynumspace 
\end{IEEEeqnarray}
where $\overline{\rm{SNR}}_{{\rm{THz}},k}\left(r_k,s_k\right)$ and $\overline{\rm{SNR}}_{{\rm{RF}},k}\left(r_k,s_k\right)$ are given in (\ref{SNR THz}) and (\ref{SNR RF}), respectively.
\end{proposition}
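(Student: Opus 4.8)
The plan is to recast $\mathscr{P}_1^{Q}$ as a strictly convex program and then read off its KKT stationarity conditions, specializing them afterwards to the two SNR regimes. First I would observe from (\ref{SNR THz}) and (\ref{SNR RF}) that, for fixed hop distances, the average SNR is \emph{linear} in the per-hop power, i.e. $\overline{\rm{SNR}}_{Q,k}(r_k,s_k)=c_k s_k$, where $c_k>0$ collects the antenna-gain, path-loss, and noise factors of hop $k$. Maximizing $\overline{\tau}_Q^T$ is then equivalent to minimizing $f(s_1,\dots,s_{K_Q})=\sum_{k=1}^{K_{Q}} 1/\overline{\tau}_{Q,k}(r_k,s_k)$ with $\overline{\tau}_{Q,k}=B_Q\log_2(1+c_k s_k)$. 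Each summand $h_k(s_k)=\ln 2\,/\big(B_Q\ln(1+c_k s_k)\big)$ is the composition of the convex decreasing map $u\mapsto \ln 2/(B_Q u)$ with the concave map $s_k\mapsto \ln(1+c_k s_k)$, hence convex; computing $h_k''$ directly shows it is in fact strictly convex on $s_k>0$. Since constraint (\ref{st:constraint1-1}) is linear, $\mathscr{P}_1^{Q}$ is strictly convex, so its minimizer is unique and is characterized exactly by the KKT conditions, which settles the uniqueness assertion.

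Next I would argue the power budget is exhausted: because each $\overline{\tau}_{Q,k}$ is strictly increasing in $s_k$, any slack in (\ref{st:constraint1-1}) could be reassigned to strictly raise throughput, so $\sum_k s_k^*=S_Q$ at the optimum. Introducing a multiplier $\lambda\ge 0$ for the active budget and imposing $h_k'(s_k^*)+\lambda=0$ gives, after differentiation,
\begin{equation*}
\frac{c_k}{(1+c_k s_k^*)\big(\ln(1+c_k s_k^*)\big)^2}=\frac{\lambda B_Q}{\ln 2}, \qquad \forall k,
\end{equation*}
that is, this quantity must be equalized across all hops. This is the exact optimality condition; note that it is transcendental in $s_k^*$ and admits no closed form, which is precisely what forces the two-regime treatment announced before the proposition.

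Finally I would specialize the common-value condition. In the high-SNR regime $c_k s_k^*\gg 1$, so $1+c_k s_k^*\approx c_k s_k^*$ and $\ln(1+c_k s_k^*)\approx \ln(c_k s_k^*)$; substituting and cancelling $c_k$ reduces equalization to $s_k^*\big(\log_2(c_k s_k^*)\big)^2=\text{const}$, and taking square roots yields $\sqrt{s_k^*}\,\log_2\!\big(\overline{\rm{SNR}}_{Q,k}(r_k,s_k^*)\big)=\text{const}$, the first branch. In the low-SNR regime $c_k s_k^*\ll 1$, so $1+c_k s_k^*\approx 1$ and $\ln(1+c_k s_k^*)\approx c_k s_k^*$; the condition collapses to $c_k (s_k^*)^2=\text{const}$, i.e. $\overline{\rm{SNR}}_{Q,k}(r_k,s_k^*)\,s_k^*=\text{const}$, the second branch. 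I expect the only real obstacle to be the transcendental exact condition above: the whole difficulty is that it cannot be inverted in closed form, so the argument rests on justifying the two asymptotic simplifications and on verifying that each resulting system, together with $\sum_k s_k^*=S_Q$, has a unique solution, which follows from the strict monotonicity of $s_k\mapsto \sqrt{s_k}\,\log_2(c_k s_k)$ and of $s_k\mapsto c_k s_k^2$, respectively.
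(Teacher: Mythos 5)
Your proof is correct, and its skeleton matches the paper's Appendix~\ref{app:prop1}: a Lagrange-multiplier equalization across hops, the two regime approximations of $\log_2(1+{\rm SNR})$, and uniqueness closed out by the strict monotonicity of $\sqrt{s_k}\,\log_2(c_k s_k)$ and $c_k s_k^2$ played against the power budget. Where you genuinely diverge is in the \emph{order} of approximation and the strength of the justification. The paper substitutes the high-/low-SNR approximation into the objective first and only then differentiates its Lagrangian, so its stationarity equations (\ref{appenA_4})--(\ref{appenA_5}) already live in the approximated problem; you instead derive the exact stationarity condition $c_k/\bigl[(1+c_k s_k^*)\ln^2(1+c_k s_k^*)\bigr]=\mathrm{const}$ for the true problem and approximate afterwards, which cleanly explains \emph{why} the two-regime treatment is forced (the exact condition is transcendental). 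Your reformulation also buys two things the paper leaves implicit: first, strict convexity of $\sum_k 1/\overline{\tau}_{Q,k}(r_k,s_k)$ in $(s_1,\dots,s_{K_Q})$ --- a worthwhile observation given that the paper emphasizes only the non-convexity of the joint problem $\mathscr{P}_0$ --- which makes the KKT conditions necessary and sufficient and yields uniqueness of the exact optimizer, not merely of the solutions of the two approximate systems; second, an explicit argument that the budget constraint (\ref{st:constraint1-1}) is active at the optimum, a step the paper tacitly assumes when it treats the constraint with equality in the uniqueness argument. Two small caveats, shared with the paper and worth stating if you write this up: taking the square root in the high-SNR branch presumes $\log_2\overline{\rm{SNR}}_{Q,k}(r_k,s_k^*)>0$ for every $k$, and the map $s\mapsto\sqrt{s}\,\log_2(c_k s)$ is strictly increasing only where $\ln(c_k s)>-2$; both hold automatically in the regime ${\rm SNR}\gg 1$ to which that branch applies.
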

\begin{proof}
See Appendix~\ref{app:prop1}.
\end{proof}


{\color{black} 
From the above results, the following conclusions can be drawn. For a hop with a longer distance, it is suggested to allocate more power for this hop. 
Importantly, we provide the corresponding quantitative derivation. 
As the link quality degrades, the effect of the distance of hop $r_k$ on the power allocated to this hop $s_k^*$ becomes more pronounced. 
}

\vspace{-2mm}
\subsection{Relay Position Selection}\label{subsection-3-3}
We consider that the number of hops ${K_{Q}}$ is fixed and the optimal power allocation strategy $s_1^*,s_2^*,\dots,s_{K_{Q}}^*$ is applied, we optimize $\overline{\tau}_Q^T$ by designing the distance of each hop. The corresponding optimization problem is $\mathscr{P}_2^Q$, such as 
\begin{subequations} 
	\begin{alignat}{2}
		\mathscr{P}_2^Q:\quad &\underset{ {r_1,r_2,\dots,r_{K_{Q}}}}{\textrm{maximize}}  &\quad& \overline{\tau}_Q^T = \frac{1}{\sum_{k=1}^{K_{Q}} \frac{1}{\overline{\tau}_{Q,k}\left(r_k,s_k^*\right)}}, \label{opt2}\\
		&\textrm{subject to:}    &      & \sum_{k=1}^{{K_{Q}}} r_k \geqslant R, \label{st:constraint2-1}
	\end{alignat}
\end{subequations}
where $R$ is the distance between the source node to the target node. The relay position selection strategy in THz and RF networks is shown in Proposition~\ref{prop2}.

\begin{proposition}[Relay Position Selection Strategy] \label{prop2} 
Given the number of hops is ${K_{Q}}$ and the optimal power allocation strategy in Proposition~\ref{prop1} is applied, the relay position selection strategy $r_1^*,r_2^*,\dots,r_{K_{Q}}^*$ of problem $\mathscr{P}_2^Q$ is,
\begin{IEEEeqnarray}{RCL}
    \left\{
 	\begin{array}{ll}
    { r_1^* = r_2^* = \dots = r_{K_{Q}}^* = \frac{R}{{K_{Q}}}}, & {{\rm{\ for \ SNR}}\gg 1} \\
     {\rm{No \ further \ requirements \ for \ }} & \\ \ \ \ \ \ \ \ \ \ \  \ \ \ \ \ r_1^*,r_2^*,\dots,r_{K_{Q}}^*, & {{\rm{\ for \ SNR}}\ll 1} 
	\end{array}
	\right. . \IEEEeqnarraynumspace 
\end{IEEEeqnarray}
\begin{proof}
See Appendix~\ref{app:prop2}.
\end{proof}
\end{proposition}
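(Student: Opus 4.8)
The plan is to recast $\mathscr{P}_2^Q$ as a minimization and exploit the permutation symmetry of the hops. Since $\overline{\tau}_Q^T = 1/\sum_{k=1}^{K_Q} 1/\overline{\tau}_{Q,k}(r_k,s_k^*)$, maximizing the total throughput is equivalent to minimizing $G(r_1,\dots,r_{K_Q}) := \sum_{k=1}^{K_Q} 1/\overline{\tau}_{Q,k}(r_k,s_k^*)$, where the $s_k^*$ are the optimal powers supplied by Proposition~\ref{prop1} for the given distances. From (\ref{SNR THz}) and (\ref{SNR RF}), $\overline{\rm{SNR}}_{Q,k}$ is strictly decreasing in $r_k$, hence so is each $\overline{\tau}_{Q,k}$; therefore shrinking any hop strictly increases throughput, and the distance budget (\ref{st:constraint2-1}) must be tight at the optimum, i.e. $\sum_{k=1}^{K_Q} r_k = R$. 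This reduces the search to the simplex $\{r_k\ge 0:\sum_k r_k = R\}$, on which $G$ is a \emph{symmetric} function of $(r_1,\dots,r_{K_Q})$ because every hop has the identical functional form.

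For the large-SNR regime I would substitute $\overline{\tau}_{Q,k} = B_Q\log_2 \overline{\rm{SNR}}_{Q,k}(r_k,s_k^*)$ and the power law of Proposition~\ref{prop1} (in which $\overline{\tau}_{Q,k}\sqrt{s_k^*}$ is constant across hops together with $\sum_k s_k^* = S_Q$) to obtain $G$ as an explicit symmetric function of the distances. The equal-distance point $r_k = R/K_Q$ (with $s_k^* = S_Q/K_Q$) makes all the $\overline{\rm{SNR}}_{Q,k}$ equal, so the Karush--Kuhn--Tucker stationarity conditions in both the power and the distance variables hold there by symmetry, identifying it as a critical point. To promote this to the global minimizer I would verify that $G$ is Schur-convex on the simplex -- either directly from the Schur--Ostrowski criterion $(r_i-r_j)(\partial_{r_i}G-\partial_{r_j}G)\ge 0$, or by showing the reduced single-hop term is convex in its hop length after the power substitution. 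Since the equal vector is majorized by every feasible distance vector with the same sum, Schur-convexity then yields $G(R/K_Q,\dots,R/K_Q)\le G(r_1,\dots,r_{K_Q})$, which is exactly $r_1^*=\dots=r_{K_Q}^*=R/K_Q$.

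For the small-SNR regime the corresponding analysis uses $\overline{\tau}_{Q,k}\approx B_Q\,\overline{\rm{SNR}}_{Q,k}\log_2 e$ together with the small-SNR power law of Proposition~\ref{prop1} ($\overline{\rm{SNR}}_{Q,k}\,s_k^*$ constant across hops). After eliminating the powers by a Cauchy--Schwarz / AM--HM step against $\sum_k s_k^* = S_Q$, the throughput collapses to a form that, in this regime, depends on the distances only through a quantity already fixed by the active constraint $\sum_k r_k = R$. Consequently the distance gradient of the reduced objective is balanced for \emph{every} feasible allocation, the Schur--Ostrowski expression degenerates to zero, and no partition of $R$ is preferred -- which is the assertion ``no further requirements'' for $r_1^*,\dots,r_{K_Q}^*$.

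The main obstacle is the coupling between power and distance introduced by Proposition~\ref{prop1}: in the large-SNR case the $\log_2 s_k^*$ term feeds back into $\overline{\tau}_{Q,k}$, so $s_k^*$ is only defined implicitly and the convexity / Schur-convexity of the reduced $G$ must be argued through that implicit relation rather than read off directly. The delicate point in the small-SNR case is dual: one must show that, once the optimal powers are inserted, the residual dependence of the throughput on the individual $r_k$ vanishes to the order of the approximation, so that the stationarity condition imposes no binding relation among the hops.
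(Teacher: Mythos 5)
Your large-SNR plan takes a genuinely different route from the paper's Appendix~B, and in outline a sounder one. The paper writes $\overline{\tau}_Q^T \approx B_Q\,\mathcal{C}/\sum_k \sqrt{s_k^*}$ with $\mathcal{C}=\sqrt{s_k^*}\log_2\overline{\rm SNR}_{Q,k}$ constant across hops, and then concludes $r_k^*=r_j^*$ from the symmetry of the partial derivatives $\partial\overline{\tau}_Q^T/\partial r_k$ --- an argument that only identifies the symmetric point as a stationary point, since symmetric objectives can have symmetry-breaking optimizers. Your majorization framework (tight distance budget, equal point as a KKT point by symmetry, global optimality via Schur-convexity) is exactly the missing global step, and you correctly flag the obstacle: because $s_k^*$ is defined only implicitly by Proposition~1, the Schur--Ostrowski verification \emph{is} the substance of the proof, not a routine check. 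Since you do not carry it out, your large-SNR case ultimately has the same logical status as the paper's --- a symmetric critical point without a completed global-optimality argument --- though you name the right tool to finish it.

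The small-SNR step, however, would fail as stated, and this is a concrete gap. Write $\overline{\rm SNR}_{Q,k}=s_k\,g_Q(r_k)$ with $g_Q$ the per-unit-power SNR. The small-SNR law $(s_k^*)^2 g_Q(r_k)=\mathcal{C}$ together with $\sum_k s_k^*=S_Q$ yields, precisely via the Cauchy--Schwarz step you invoke, $\min_{s}\sum_k 1/\overline{\rm SNR}_{Q,k}=\bigl(\sum_k g_Q(r_k)^{-1/2}\bigr)^2/S_Q$, hence $\overline{\tau}_Q^T\approx B_Q S_Q/\bigl(\ln 2\,\bigl(\sum_k g_Q(r_k)^{-1/2}\bigr)^2\bigr)$. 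This depends on the partition through $\sum_k g_Q(r_k)^{-1/2}$, which is \emph{not} fixed by $\sum_k r_k=R$: for THz, $g_{\rm THz}(r)^{-1/2}\propto r\exp(\beta_{\rm THz}r/2)$ is strictly convex, so the equal partition is strictly preferred; only for RF with $\beta_{\rm RF}=2$ is $g_{\rm RF}(r)^{-1/2}\propto r$ linear and the objective genuinely allocation-independent. So the residual dependence does not vanish to the order of the approximation, and the degeneracy of the Schur--Ostrowski expression you predict does not occur in general. To be fair, the paper's own proof glosses the same point --- it treats $\mathcal{C}=s_k^*\overline{\rm SNR}_{Q,k}$ as a constant independent of $r_k$, suppressing its dependence on the distances through the power-budget normalization. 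Your computation, pushed through honestly, would not establish ``no further requirements'' but rather that equal spacing remains weakly (for THz strictly) optimal at small SNR --- which is consistent with the paper's remark after Proposition~2 that equal distances are ``one of the optimal strategies,'' but is not the proof plan you describe.
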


Another interpretation of the above conclusion is as follows: In scenarios with a large SNR, it is advantageous to uniformly distribute the relay node locations. Conversely, in scenarios with a small SNR, maintaining equal distances between each hop is considered one of the optimal strategies. Interestingly, the findings obtained for non-uniform distribution often hold for the case of uniform distribution as well (as observed in subsection~\ref{number of hops}). Therefore, in maximum throughput routing, the relay positions are assumed to be uniformly distributed, even if SNR~$\ll$~1.

\subsection{Number of Hops Design}\label{number of hops}
Based on the power allocation and relay position selection strategies, the number of hops design is analyzed. This is the final step in deriving maximum throughput routing strategies. Under the condition that the relay position selection strategy is applied ($r_1^*=r_2^*=\dots=r_{K_{Q}}^*={R}/{{K_{Q}}}$),  $s_1^*=s_2^*=\dots=s_{K_{Q}}^*={S_Q}/{{K_{Q}}}$ is the unique solution for the power allocation strategy, as proved in Proposition~\ref{prop1}. Therefore, the specific optimization problem is $\mathscr{P}_3^Q$, 
\begin{equation} 
		\mathscr{P}_3^Q:\quad \underset{ {{K_{Q}}}}{\textrm{maximize}}  \quad \overline{\tau}_Q^T = \frac{\overline{\tau}_{Q,k}\left(\frac{R}{{K_{Q}}},\frac{S_Q}{{K_{Q}}}\right)}{{K_{Q}}}. \label{opt3}
\end{equation}

\begin{proposition}[Number of Hops Design] \label{prop3}
Given that the power allocation strategy described in Proposition~\ref{prop1} and the relay position selection strategy in Proposition~\ref{prop2} are applied, the optimal number of hops ${K_{Q}}^*$ in $\mathscr{P}_3^Q$ is given by:
\begin{itemize}
    \item For SNR $\ll$ 1, ${K_{Q}}^* \rightarrow + \infty$, that is, more hops bring to larger throughput.
    \item For SNR $\gg$ 1, ${K_{Q}}^*$ be obtained by rounding up and down the unique solution of the following transcendental equation for $\widetilde{K}_{Q}$,
\end{itemize}
\begin{IEEEeqnarray}{RCL}\label{optimal number of hops}
    \left\{
 	\begin{array}{ll}
    { \frac{2R\beta_{{\rm{THz}}}}{\widetilde{K}_{\rm{THz}}} - \ln \frac{S \,G_{{\rm{THz}}} \, \eta _{{\rm{THz}}}}{\sigma_{\rm{THz}}^2R^2} + 1 = \ln \widetilde{K}_{\rm{THz}} } \\
    { \widetilde{K}_{\rm{RF}} \left ( \frac{R}{\widetilde{K}_{\rm{RF}}} \right )^{\beta_{\rm{RF}}} = \exp\left ( 1- {\beta_{\rm{RF}}}+\ln \frac{S \,G_{{\rm{RF}}} \, \eta _{{\rm{RF}}}}{\sigma_{\rm{RF}}^2}\right ) }
	\end{array}
	\right. . \IEEEeqnarraynumspace 
\end{IEEEeqnarray}
\begin{proof}
See Appendix~\ref{app:prop3}.
\end{proof}
\end{proposition}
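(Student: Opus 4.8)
The plan is to reduce the mixed-integer problem $\mathscr{P}_3^Q$ to a single continuous variable and then round. After substituting the allocations from Propositions~\ref{prop1} and~\ref{prop2}, every hop becomes identical with $r_k^*=R/K_Q$ and $s_k^*=S_Q/K_Q$, so the harmonic sum in $\overline{\tau}_Q^T$ collapses to $K_Q/\overline{\tau}_{Q,k}$ and the objective is exactly $\overline{\tau}_{Q,k}(R/K_Q,S_Q/K_Q)/K_Q$ as in (\ref{opt3}). Inserting (\ref{SNR THz}) and (\ref{SNR RF}), the per-hop SNR reduces to $\overline{\rm{SNR}}_{\rm{THz}}=A\,K_Q\exp(-\beta_{\rm{THz}}R/K_Q)$ with $A=S_Q G_{\rm{THz}}\eta_{\rm{THz}}/(\sigma_{\rm{THz}}^2 R^2)$, and $\overline{\rm{SNR}}_{\rm{RF}}=C\,K_Q^{\beta_{\rm{RF}}-1}$ with $C=S_Q G_{\rm{RF}}\eta_{\rm{RF}}/(\sigma_{\rm{RF}}^2 R^{\beta_{\rm{RF}}})$. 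I then treat the two SNR regimes separately, since each linearises the logarithm differently.

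For the small-SNR regime I would use $\log_2(1+x)\approx x/\ln 2$. The prefactor $1/K_Q$ then cancels part of the hop dependence: the THz objective becomes proportional to $\exp(-\beta_{\rm{THz}}R/K_Q)$, which increases monotonically toward its supremum as $K_Q\to\infty$, while the RF objective becomes proportional to $K_Q^{\beta_{\rm{RF}}-2}$, which is strictly increasing in the usual path-loss regime $\beta_{\rm{RF}}>2$. In both cases the objective is strictly increasing in $K_Q$ with no finite maximiser, establishing $K_Q^*\to+\infty$.

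For the large-SNR regime I would use $\log_2(1+x)\approx\log_2 x$, relax $K_Q$ to a continuous $K$, and impose the first-order condition. Writing the THz objective (up to the constant $B_{\rm{THz}}/\ln 2$) as $f(K)=(\ln A+\ln K-\beta_{\rm{THz}}R/K)/K$ and differentiating, the stationarity condition simplifies to
\begin{equation*}
\frac{2\beta_{\rm{THz}}R}{K}-\ln A+1=\ln K,
\end{equation*}
which is precisely the first line of (\ref{optimal number of hops}); the identical computation for $f(K)=(\ln C+(\beta_{\rm{RF}}-1)\ln K)/K$ reproduces the second line after taking logarithms of both sides.

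The hard part is certifying that this stationary point is the \emph{unique} maximiser, rather than merely a critical point, so that the rounding step is well defined. For THz I would study the numerator of $f'$, namely $\phi(K)=1+2\beta_{\rm{THz}}R/K-\ln A-\ln K$, and note that $\phi'(K)=-2\beta_{\rm{THz}}R/K^2-1/K<0$, so $\phi$ is strictly decreasing with $\phi(0^+)=+\infty$ and $\phi(+\infty)=-\infty$; hence it has a single zero $\widetilde{K}_{\rm{THz}}$ across which $f'$ changes sign from positive to negative, giving strict unimodality. The RF auxiliary function $\psi(K)=(\beta_{\rm{RF}}-1)-\ln C-(\beta_{\rm{RF}}-1)\ln K$ obeys $\psi'(K)=-(\beta_{\rm{RF}}-1)/K<0$ for the physical range $\beta_{\rm{RF}}>1$ and likewise admits a unique sign change, yielding a unique interior maximiser. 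Finally, since $K_Q$ must be a positive integer, I would evaluate the objective at $\lceil\widetilde{K}_Q\rceil$ and $\lfloor\widetilde{K}_Q\rfloor$ and retain the larger, which by unimodality is the global integer optimum, completing the proof.
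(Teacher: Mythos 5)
Your proposal is correct and takes essentially the same route as the paper's Appendix~C: linearize the logarithm in the two SNR regimes, observe monotone growth in $K_Q$ at low SNR (THz via $\exp(-\beta_{\rm{THz}}R/K_Q)$, RF via $K_Q^{\beta_{\rm{RF}}-2}$), then at high SNR relax $K_Q$ to a continuous variable, set the first-order condition to recover both lines of (\ref{optimal number of hops}), and round the unique stationary point. Your explicit sign analysis of $\phi$ and $\psi$ simply makes rigorous the monotonicity argument the paper states informally (and your consistent signs confirm the paper's final equations despite a sign typo in its intermediate RF derivative display), so there is no substantive difference in approach.
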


Transcendental equations are unsolvable, however, since the hop number is an integer, low-complexity iterative algorithms can be employed to approximate the solutions. By rounding up and down, two potential optimal numbers of hops can be obtained, which require further comparison. Combining the above propositions, the maximum throughput routing strategies can be summarised in the following theorem.

\begin{theorem}[Ideal Routing Strategies]\label{theorem1}
The necessary and sufficient conditions for the maximum throughput strategy in both THz and RF networks are as follows.

\begin{itemize}
    \item When SNR $\ll$ 1, (\romannumeral1) the selected relay nodes are all located on the segment between the source node and target node, (\romannumeral2) the optimal number of hops ${K}_{Q}^*$ goes to infinity, and (\romannumeral3) the power is allocated according to ${{\overline{\rm{SNR}}_{Q,k}\left(r_k,s_k\right)} \, s_k = {\overline{\rm{SNR}}_{Q,j}\left(r_j,s_j\right)} \, s_j, \ \forall j,k}$ (the meaning of the variables is described in Proposition~\ref{prop1}).
    \item When SNR $\gg$ 1, (\romannumeral1) the selected relay nodes are all located on the segment between the source node and target node, (\romannumeral2) the optimal number of hops ${K}_{Q}^*$ is obtained by rounding up and down the unique solution of the transcendental equation defined in (\ref{optimal number of hops}), and (\romannumeral3) each hop has the same distance and transmission power.
\end{itemize}
\end{theorem}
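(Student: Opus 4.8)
The plan is to read Theorem~\ref{theorem1} as the assembly of the three stepwise results already established, Propositions~\ref{prop1}--\ref{prop3}, together with one ingredient that has not yet been isolated: the justification of part~(\romannumeral1). The structural fact that makes the assembly lossless is that, for a \emph{fixed} hop count $K_Q$, the power constraint $\sum_k s_k\le S_Q$ and the distance constraint $\sum_k r_k\ge R$ in \eqref{st:constraint0-1} decouple, so the feasible set factorizes and $\max_{r,s}\overline{\tau}_Q^T=\max_{r}\big(\max_{s}\overline{\tau}_Q^T\big)$; applying the outer $\max_{K_Q}$ then gives the joint optimum of $\mathscr{P}_0$. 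Thus proving the theorem reduces to (a) showing the on-segment placement with an active distance constraint is optimal, and (b) chaining the inner power maximization, the intermediate distance maximization, and the outer hop-count maximization.

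First I would dispatch part~(\romannumeral1), which holds in both SNR regimes. Each per-hop average throughput in \eqref{average tao}--(\ref{SNR RF}) is strictly decreasing in the hop length $r_k$, hence $1/\overline{\tau}_{Q,k}$ is strictly increasing in $r_k$, and so the objective $1/\sum_k(1/\overline{\tau}_{Q,k})$ strictly increases whenever any $r_k$ is decreased. Subject to the triangle-inequality constraint $\sum_k r_k\ge R$, this forces the constraint to bind, $\sum_k r_k=R$, at any maximizer; geometrically the equality $\sum_k r_k=R$ is attainable only when all relays lie on the source--target segment. This yields condition~(\romannumeral1) independently of the power split, and reduces the two-dimensional placement problem to the one-dimensional distance problem that Proposition~\ref{prop2} actually solves.

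With part~(\romannumeral1) in hand, I would chain the propositions in the nested order. For fixed $K_Q$ the inner power maximization is resolved by Proposition~\ref{prop1}, giving condition~(\romannumeral3) in its regime-specific form (the $\sqrt{s}$-weighted log identity for $\mathrm{SNR}\gg1$, the $\overline{\mathrm{SNR}}\,s$ identity for $\mathrm{SNR}\ll1$) as the function $s_k^*(r)$. Substituting $s_k^*(r)$, the distance maximization is resolved by Proposition~\ref{prop2}: for $\mathrm{SNR}\gg1$ it forces $r_k^*=R/K_Q$, and back-substituting these equal lengths into the power identity collapses it to $s_k^*=S_Q/K_Q$, so ``equal distance and equal power'' follows jointly; for $\mathrm{SNR}\ll1$ the distances remain free. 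Finally the outer hop-count maximization is resolved by Proposition~\ref{prop3}, yielding $K_Q^*\to+\infty$ for $\mathrm{SNR}\ll1$ and the rounded unique root of the transcendental system \eqref{optimal number of hops} for $\mathrm{SNR}\gg1$. Collecting these gives the stated conditions (sufficiency), and the uniqueness claimed in the three propositions upgrades this to necessity.

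The delicate point is not the bookkeeping but guaranteeing that this stepwise optimum is the \emph{global} optimum of a non-convex MINLP. The identity $\max_{r,s}=\max_r\max_s$ is elementary, but it transfers global optimality only if each stage genuinely returns the global maximizer of its subproblem; since $\overline{\tau}_{Q,k}$ is not jointly concave in $(r_k,s_k)$, this hinges on the large-/small-SNR linearizations performed in the appendices, and the theorem inherits their asymptotic scope. I would therefore flag that the ``necessary and sufficient'' reading is sharp in the $\mathrm{SNR}\gg1$ branch (unique power split, unique equal-distance placement, unique rounded hop count), whereas for $\mathrm{SNR}\ll1$ necessity is weaker in the expected way: condition~(\romannumeral2) is a limit ($K_Q^*\to\infty$) rather than a finite optimizer, and the positional freedom in condition~(\romannumeral1) beyond collinearity is genuine, so there the equivalence should be read in the limiting sense.
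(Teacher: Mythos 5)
Your proposal is correct and follows essentially the same route as the paper: Theorem~\ref{theorem1} is stated there as the direct assembly of Propositions~\ref{prop1}--\ref{prop3} (the stepwise decomposition $\max_{K_Q}\max_{r}\max_{s}$ of $\mathscr{P}_0$ via $\mathscr{P}_1^Q$, $\mathscr{P}_2^Q$, $\mathscr{P}_3^Q$), which is exactly your chaining argument, including the collapse to equal powers $s_k^*=S_Q/K_Q$ under equal distances. The one piece you add beyond the paper is welcome rather than divergent: the paper justifies condition~(\romannumeral1) only informally in the problem statement (collinear relays give the shortest relay path, encoded as the constraint $\sum_k r_k \geqslant R$ in (\ref{st:constraint0-1})), whereas your monotonicity argument---each $\overline{\tau}_{Q,k}$ strictly decreasing in $r_k$ forces the constraint to bind at any maximizer, and $\sum_k r_k = R$ is geometrically attainable only for on-segment relays---makes that step rigorous; your closing caveat that necessity in the ${\rm SNR}\ll 1$ branch holds only in a limiting sense ($K_Q^*\to\infty$, positional freedom beyond collinearity) is likewise an accurate reading of the paper's claims.
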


\section{Throughput and Coverage Analysis Under Maximum Throughput Routing}\label{section4}
{\color{black}Since the spatial distribution of relay nodes was not considered in Section~\ref{section3}, the throughput associated with the relay locations determined by the ideal routing strategy (Theorem~1) serves as an ideal upper bound. In this section, we focus on the throughput based on homogeneous PPP distributions of relay nodes.}
Additionally, we explore another well-known metric, namely coverage probability, which can also be interpreted as an expression of instantaneous throughput at low SNR.

\vspace{-2mm}
\subsection{Single-hop Distance Distributions} \label{single-hop distance}
Before we start deriving the analysis of the key metrics, we need three tradeoffs to transform the ideal maximum throughput strategy from the ideal positions given in Theorem~\ref{theorem1} to the specific PPP-based relay distribution. The reasons behind these tradeoffs and their corresponding solutions are as follows.

\begin{itemize}
    \item Since it is impractical to find relay nodes at exactly the ideal positions, we instead look for the nearest nodes of ideal positions as relay nodes.
    \item Due to the randomness of relay nodes' positions, the distance of each hop in the routing is different from others. For the power allocation strategy, determining the power of each hop requires knowing the distance of all hops. Based on the uniform distribution assumption in Subsection~\ref{subsection-3-3}, the specific distance difference between each hop is ignored and each hop is assigned with the same power.
    \item When SNR $\ll$ 1, the optimal  number of hops ${K}_{Q}^* \rightarrow \infty$ in the ideal scenario. However, in the case where the number of relay nodes is limited, an excessive number of hops will not only cause the same relay to be selected by adjacent hops but also reduce the total throughput. Considering the null probability property of PPP, hop number selection should adhere to the following criteria,

    \begin{equation}\label{criteria}
        K_Q < \sqrt{\frac{\lambda_Q \pi R^2}{4\ln\frac{1}{\varepsilon}} },
    \end{equation}
    where $Q=\{ {\rm{THz}}, {\rm{RF}} \}$. When (\ref{criteria}) is satisfied, the probability that the two adjacent hops select the same relay is not greater than $1-\left( 1 - \varepsilon \right)^{K_Q}$.
\end{itemize}

Based on the first tradeoff, the distance of each hop is characterized as a random variable. This random variable can be categorized into two types, which are defined as follows.

\begin{definition}[Type-\uppercase\expandafter{\romannumeral1} Distance Distribution]
Assume $x_1$ and $x_2$ are reference points (do not belong to the PPP) with distance $r$. Denote the nearest point in PPP to $x_2$ is $y_2$. The distance distribution $f_{Q}^{(1)} (\rho | r)$ between $x_1$ and $y_2$ is called the type-\uppercase\expandafter{\romannumeral1} distance distribution, where $Q=\{ \rm{THz,RF} \}$ and $\rho$ is the distance between $x_1$ and $y_2$.
\end{definition}

Since the source node and target node in the routing remain fixed, the distance of the first hop and the distance of the last hop follow the type-\uppercase\expandafter{\romannumeral1} distance distribution. The \ac{PDF} of type-\uppercase\expandafter{\romannumeral1} distance distribution can be expressed as follows.

\begin{lemma}[PDF of Type-\uppercase\expandafter{\romannumeral1} Distance Distribution] \label{lemma1}
Given that the distance of two reference points is $r$, the PDF of type-\uppercase\expandafter{\romannumeral1} distance distribution is,
\begin{equation}\label{PDF_1}
\begin{split}
    f_{Q}^{(1)} (\rho | r) =\int_{r-\rho}^{r+\rho} \frac{2 \lambda_Q \rho \exp \left( - \lambda_Q \pi l^2 \right)} {r \sqrt{ 1 - \frac{\left(r^2 + l^2 - \rho^2  \right)^2 }{4 r^2 l^2} }} \mathrm{d} l.
\end{split}
\end{equation}
\begin{proof}
See Appendix~\ref{app:lemma1}. 
\end{proof}
\end{lemma}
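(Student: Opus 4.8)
The plan is to reduce the computation to the two elementary descriptors of the nearest point $y_2$ of the PPP relative to the reference point $x_2$: its distance $L = |x_2 - y_2|$ and its bearing $\Theta$, measured from the direction $x_2 \to x_1$. First I would recall the standard nearest-neighbour law for a homogeneous planar PPP: by the void probability, $\mathbb{P}(L > l) = \exp(-\lambda_Q \pi l^2)$, so $L$ has density $f_L(l) = 2\pi\lambda_Q\, l \exp(-\lambda_Q \pi l^2)$ on $l \ge 0$. Isotropy of the PPP makes $\Theta$ uniform on $[0, 2\pi)$ and independent of $L$; equivalently, the position of $y_2$ (in polar coordinates centred at $x_2$) has the rotationally symmetric density $\lambda_Q \exp(-\lambda_Q \pi l^2)$ per unit area.

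Next I would express the target distance $\rho = |x_1 - y_2|$ through the triangle $x_1 x_2 y_2$, whose sides $x_1 x_2$ and $x_2 y_2$ have lengths $r$ and $L$ with included angle $\Theta$. The law of cosines gives $\rho^2 = r^2 + L^2 - 2 r L \cos\Theta$. Conditioning on $L = l$, I would obtain the conditional density of $\rho$ by the change of variables $\theta \mapsto \rho$. Differentiating yields $d\rho/d\theta = r l \sin\theta/\rho$, and since each value of $\rho \in [\,|r-l|,\, r+l\,]$ is attained by the two angles $\theta$ and $2\pi - \theta$, the factor $2$ combines with the uniform density $1/(2\pi)$ to give
\begin{equation*}
f_{\rho \mid L}(\rho \mid l) = \frac{1}{\pi}\left|\frac{d\theta}{d\rho}\right| = \frac{\rho}{\pi\, r\, l\, \sqrt{1 - \cos^2\theta}}, \qquad \cos\theta = \frac{r^2 + l^2 - \rho^2}{2 r l}.
\end{equation*}

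Finally I would marginalize, $f_Q^{(1)}(\rho \mid r) = \int f_{\rho \mid L}(\rho \mid l)\, f_L(l)\, dl$; here the factor $l$ in the Jacobian cancels the $l$ from $f_L$ and the $\pi$ cancels against $2\pi$, leaving exactly the integrand of \eqref{PDF_1} after substituting $\cos\theta$. The integration range follows from the feasibility constraint $|r - l| \le \rho \le r + l$, equivalently $l \in [\,|r - \rho|,\, r + \rho\,]$, which coincides with the set on which the radical is real; writing the lower limit as $r - \rho$ is harmless because the integrand is imaginary (hence not counted) for $l < |r - \rho|$. The main delicacy I anticipate is the bookkeeping in this change of variables — correctly accounting for the two-to-one angle map, the resulting factor of $2$, and the domain on which $\sin\theta$ (equivalently the radical) stays real — rather than any deep probabilistic step; once the nearest-neighbour law and isotropy are established, the remainder is a law-of-cosines Jacobian calculation.
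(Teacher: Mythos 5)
Your proposal is correct and follows essentially the same route as the paper's Appendix D: both condition on the nearest-neighbour distance $L=l$ with density $2\pi\lambda_Q l \exp(-\lambda_Q\pi l^2)$ from the void probability, invoke isotropy of the PPP, and apply the law of cosines to the triangle $x_1x_2y_2$. The only organizational difference is that the paper first forms the CDF via the arc fraction $\frac{2}{2\pi}\arccos\bigl(\frac{r^2+l^2-\rho^2}{2rl}\bigr)$ and then differentiates with Leibniz's rule (checking that the boundary terms vanish since $\arccos(1)=0$), whereas you perform the equivalent Jacobian computation directly at the density level with the two-to-one angle map; your remark that the lower limit $r-\rho$ versus $|r-\rho|$ is harmless because the radicand is negative on the excess range matches the convention the paper adopts implicitly.
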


\begin{definition}[Type-\uppercase\expandafter{\romannumeral2} Distance Distribution]
Assume $x_1$ and $x_2$ are reference points (do not belong to the PPP) with distance $r$. Denote the nearest points in PPP to $x_1$ and $x_2$ are $y_1$ and $y_2$ respectively. The distance distribution $f_{Q}^{(2)} (\rho | r)$ between $y_1$ and $y_2$ is called the type-\uppercase\expandafter{\romannumeral2} distance distribution, where $\rho$ is the distance between $y_1$ and $y_2$.
\end{definition}

As opposed to the type-\uppercase\expandafter{\romannumeral1} distance distribution, type-\uppercase\expandafter{\romannumeral2} distance distribution describes the distance distribution of the middle hops (except the first hop and the last hop). The \ac{PDF} of type-\uppercase\expandafter{\romannumeral2} distance distribution is derived from the PDF of type-\uppercase\expandafter{\romannumeral1} distance distribution.

\begin{lemma} [PDF of Type-\uppercase\expandafter{\romannumeral2} Distance Distribution] \label{lemma2}
Given that the distance of two reference points is $r$, the PDF of type-\uppercase\expandafter{\romannumeral2} distance distribution is,
\begin{IEEEeqnarray}{RCL}\label{PDF_2}
    f_{Q}^{(2)} (\rho | r)  &= & \int_{0}^{\infty }\int_{0}^{2\pi}\lambda _{Q}  \exp\left ( - \lambda _{Q} \pi {\hat{\rho}}^2 \right )  \notag \\   
   & & \times \int_{\widehat{r}-\rho}^{\widehat{r}+\rho} \frac{2 \lambda_Q \rho \exp \left( - \lambda_Q \pi l^2 \right) } {\widehat{r} \sqrt{ 1 - \frac{\left(\widehat{r}^2 + l^2 - \rho^2  \right)^2 }{4 \widehat{r}^2 l^2} }} \mathrm{d} l \mathrm{d} \theta \mathrm{d} \hat{\rho},
\end{IEEEeqnarray}
where $\widehat{r} =\sqrt{r^2+\hat{\rho}^2-2r \hat{\rho} \cos\theta}$.
\begin{proof}
See Appendix~\ref{app:lemma2}. 
\end{proof}
\end{lemma}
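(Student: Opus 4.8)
The strategy is to assemble the type-II law from the type-I law of Lemma~\ref{lemma1} by conditioning on the location of $y_1$, the PPP point nearest to $x_1$. Once $y_1$ is fixed, the separation $\rho=|y_1-y_2|$ is governed by exactly the geometry of Lemma~\ref{lemma1}: $y_1$ now plays the role of the external reference point, $x_2$ is unchanged, and the reference separation $r$ is replaced by $\widehat{r}=|y_1-x_2|$. Accordingly I would proceed in four steps: (i) write the law of $y_1$ relative to $x_1$; (ii) express $\widehat{r}$ by the law of cosines; (iii) invoke Lemma~\ref{lemma1} to obtain the conditional density of $\rho$ given $y_1$; and (iv) marginalize over the position of $y_1$.

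For step (i), I place $y_1$ in polar coordinates $(\hat{\rho},\theta)$ centred at $x_1$. By the intensity/void characterization of a homogeneous PPP, the probability that the nearest point to $x_1$ lies in an area element $\mathrm{d}z$ at distance $\hat{\rho}$ equals the intensity times the void probability of the disk of radius $\hat{\rho}$, i.e.\ $\lambda_Q\exp(-\lambda_Q\pi\hat{\rho}^2)\,\mathrm{d}z$, with $\theta$ uniform and $\mathrm{d}z=\hat{\rho}\,\mathrm{d}\hat{\rho}\,\mathrm{d}\theta$. For step (ii), since the segments $x_1x_2$ and $x_1y_1$ have lengths $r$ and $\hat{\rho}$ and subtend the angle $\theta$ at $x_1$, the law of cosines gives $\widehat{r}=\sqrt{r^2+\hat{\rho}^2-2r\hat{\rho}\cos\theta}$, matching the definition of $\widehat{r}$ accompanying (\ref{PDF_2}). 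For step (iii), conditioned on $y_1$ the point $y_2$ is still the PPP point nearest to $x_2$, so the conditional density of $\rho$ is exactly the type-I density at the updated separation, $f_Q^{(1)}(\rho\,|\,\widehat{r})$, whose explicit form is the inner $l$-integral of (\ref{PDF_2}) by Lemma~\ref{lemma1}. Finally, integrating $f_Q^{(1)}(\rho\,|\,\widehat{r})$ against the law of $y_1$ over $\hat{\rho}\in(0,\infty)$ and $\theta\in(0,2\pi)$ reproduces (\ref{PDF_2}).

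The delicate point---and the step I expect to be the main obstacle---is the legitimacy of invoking Lemma~\ref{lemma1} in step (iii). In the type-I setting the reference point does not belong to the PPP, whereas here $y_1$ is itself a PPP point: conditioning on it both forbids points inside the disk $B(x_1,\hat{\rho})$ and makes $y_1$ a candidate for the point nearest to $x_2$. Strictly, the two nearest-neighbour events attached to $x_1$ and $x_2$ are correlated whenever their exclusion disks can overlap, so $f_Q^{(1)}(\rho\,|\,\widehat{r})$ is the conditional law only under the approximation that the two searches are independent. I would close the argument by adopting this independence---the standard stochastic-geometry simplification already underlying the tradeoffs preceding the lemma---noting that it becomes exact when $r$ is not small relative to the typical nearest-neighbour distance $\sim 1/\sqrt{\lambda_Q}$, so that the exclusion disks are unlikely to overlap. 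Under this assumption the conditional law of $y_2$ depends on $y_1$ only through $\widehat{r}$, the substitution into Lemma~\ref{lemma1} is justified, and the marginalization yields (\ref{PDF_2}).
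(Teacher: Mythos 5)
Your proposal matches the paper's own proof in all essentials: Appendix~\ref{app:lemma2} likewise conditions on the location of $y_1$ in polar coordinates with density $\lambda_Q\exp(-\lambda_Q\pi\hat{\rho}^2)$ (written there as $f_Q^l(\hat{\rho})/(2\pi\hat{\rho})$), obtains $\widehat{r}$ by the law of cosines, substitutes the type-I law of Lemma~\ref{lemma1} at separation $\widehat{r}$, and marginalizes --- the only cosmetic difference being that the paper assembles the CDF first and then differentiates via Leibniz's rule, while you work directly at the PDF level. Your closing discussion of the independence idealization (the exclusion disk around $x_1$ and the possibility that $y_1$ is itself nearest to $x_2$) is a point the paper's proof passes over silently, handling it only indirectly through the hop-number criterion in (\ref{criteria}), so flagging it is a strength rather than a gap.
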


\subsection{Throughput Analysis}
In this subsection, we first propose a stepwise-optimal maximum throughput routing strategy. Then, we introduce a stepwise-suboptimal routing strategy, which is more suitable for analysis, according to the tradeoff schemes.


\begin{definition}[Stepwise-Optimal Routing Strategy]\label{def3}
For both THz and RF networks, the stepwise-optimal maximum throughput routing strategies satisfy the following three conditions:
\begin{itemize}
    \item The optimal number of hops $\widehat{K}_Q$ (where $Q= \{ \rm{THz,RF} \}$, the same below), is first derived by Proposition~\ref{prop3}. If the criteria in (\ref{criteria}) is not satisfied, $\widehat{K}_Q$ is replaced as the largest integer that holds the inequation (\ref{criteria}).
    \item The $\widehat{K}_Q - 1$ ideal relay positions are uniformly distributed on the segment connecting the source node and the target node. The $\widehat{K}_Q - 1$ relay nodes are the nearest nodes of ideal relay positions.
    \item The transmission power of each hop is allocated according to the power allocation strategy given in Proposition~\ref{prop1}.
\end{itemize}
\end{definition}

Furthermore, to facilitate the derivation within the SG framework, we have made a compromise and introduced a stepwise-suboptimal strategy.

\begin{definition}[Stepwise-Suboptimal Routing Strategy]\label{def4}
In both THz and RF networks, the stepwise-suboptimal maximum throughput routing strategies align with the stepwise-optimal routing strategy, except for the third condition: in stepwise-suboptimal routing strategies, the transmission power of each hop is $\frac{S}{\widehat{K}_Q}$ 
\end{definition}

Based on the above two lemmas and Definition~\ref{def4}, the analytical expression of throughput is given by the following theorems.

\begin{theorem}[Throughput of THz Networks]\label{theorem2}
Given that the stepwise-suboptimal maximum throughput routing strategy is applied, the throughput of THz networks, denoted as $\tau_{\rm{THz}}^T$, is given by,
\begin{IEEEeqnarray}{RCL}
    \tau_{\rm{THz}}^T & =& \frac{1}{\frac{2}{\tau_{{\rm{THz}},1}}+\frac{K_{\rm{THz}}-2}{\tau_{{\rm{THz}},2}}}  \\ &=&\frac{{\tau_{{\rm{THz}},1}}\cdot {\tau_{{\rm{THz}},2}}}{2{\tau_{{\rm{THz}},2}}+\left( K_{\rm{THz}}-2 \right){\tau_{{\rm{THz}},1}}},
\end{IEEEeqnarray}
where ${\tau_{{\rm{THz}},k}}$ is expressed as,
\begin{IEEEeqnarray}{RCL}
    &\tau_{{\rm THz},k}& \!=\!\! \int_{0}^{\infty }\int_{0}^{\infty} \frac{f_{{\rm{THz}}}^{(k)}\left(\rho \big| \frac{R}{K_{\rm{THz}}} \right)}{\Gamma\left(\mu\right)}  \notag \\
    &\times \! \Gamma \!\Bigg( \!\mu,&\! \mu \! \left( \! \frac{ K (2^{\frac{t}{B_{\rm THz}}}\!-\!1) \rho^2 \!\exp\left(\beta_{\rm{THz}} \rho \right) \sigma_{\rm{THz}}^2 }  {S \cdot G_{{\rm{THz}}} \eta_{{\rm{THz}}}} \right)^{\!\frac{\alpha}{2}} \! \Bigg) {\rm d} \rho {\rm d} t,  \IEEEeqnarraynumspace 
\end{IEEEeqnarray}
where $k=1,2$, $f_{{\rm{THz}}}^{(1)}\left(\rho \big| \frac{R}{K_{\rm{THz}}} \right)$ and $f_{{\rm{THz}}}^{(2)}\left(\rho \big| \frac{R}{K_{\rm{THz}}} \right)$ are defined in (\ref{PDF_1}) and (\ref{PDF_2}), respectively.
\begin{proof}
See Appendix~\ref{app:theorem2}.
\end{proof}
\end{theorem}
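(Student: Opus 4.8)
The plan is to derive the throughput expression by combining the half-duplex latency-summation formula in (\ref{total tau}) with the single-hop distance distributions from Lemmas~\ref{lemma1} and~\ref{lemma2}, then taking the expectation over the $\alpha$-$\mu$ small-scale fading. Under the stepwise-suboptimal strategy (Definition~\ref{def4}), each of the $K_{\rm{THz}}$ hops receives equal power $S/K_{\rm{THz}}$, and the ideal relay positions are spaced uniformly at separation $R/K_{\rm{THz}}$. The first and last hops each have one fixed endpoint (source or target) and therefore follow the type-\uppercase\expandafter{\romannumeral1} distribution $f_{\rm{THz}}^{(1)}(\cdot\,|\,R/K_{\rm{THz}})$, while the remaining $K_{\rm{THz}}-2$ middle hops connect two random nearest-nodes and follow the type-\uppercase\expandafter{\romannumeral2} distribution $f_{\rm{THz}}^{(2)}(\cdot\,|\,R/K_{\rm{THz}})$. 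This immediately explains the structure of the reciprocal sum in the theorem statement: $2/\tau_{{\rm{THz}},1}$ for the two boundary hops and $(K_{\rm{THz}}-2)/\tau_{{\rm{THz}},2}$ for the interior hops.

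The key computational step is evaluating the per-hop throughput $\tau_{{\rm{THz}},k}$. First I would write the throughput requirement $\tau_{{\rm{THz}},k} = t$ as a coverage-style event on the instantaneous SNR in (\ref{SNRTHz}), converting the bandwidth-log expression $B_{\rm{THz}}\log_2(1+\mathrm{SNR})$ into the threshold condition $\mathrm{SNR} \geq 2^{t/B_{\rm{THz}}}-1$ via the single-hop definition (\ref{taoQk}). Solving the SNR inequality for the fading variable $\mathcal{X}_{\rm{THz}}$ isolates the event $\mathcal{X}_{\rm{THz}} \geq m^*(\rho,t)$, where the threshold $m^*$ collects the deterministic factors: the power $S/K_{\rm{THz}}$, gains, the exponential absorption term $\exp(\beta_{\rm{THz}}\rho)$, the inverse-square geometric loss $\rho^2$, and noise $\sigma_{\rm{THz}}^2$. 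The probability of this event is precisely the CCDF $\bar{F}_{\mathcal{X}_{\rm{THz}}}(m^*) = \Gamma(\mu, \mu(m^*/\overline{m})^{\alpha/2})/\Gamma(\mu)$ stated in the channel model, which furnishes the incomplete-Gamma factor appearing in the integrand.

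The remaining work is to assemble the expectation. I would integrate the conditional CCDF against the hop-distance PDF $f_{\rm{THz}}^{(k)}(\rho\,|\,R/K_{\rm{THz}})$ over $\rho$, and integrate over the throughput variable $t$ to recover the ergodic per-hop rate (equivalently, using $\mathbb{E}[\tau] = \int \P(\tau \geq t)\,\mathrm{d}t$ for the nonnegative quantity). Substituting $m^*(\rho,t)$ into the CCDF, normalizing $\overline{m}$ into the constant, and writing $k=1,2$ to select the appropriate distance distribution yields the double integral in the claimed form, with the bracketed argument of the upper incomplete Gamma function matching the collected threshold raised to the $\alpha/2$ power.

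The main obstacle I anticipate is justifying the interchange that lets the two boundary hops and the interior hops be treated through the same scalar per-hop template while still summing their reciprocal latencies correctly. Concretely, the latency formula (\ref{total tau}) sums $1/\tau_{{\rm{THz}},k}$ over hops, but each $\tau_{{\rm{THz}},k}$ is itself an expectation over an independent random hop distance; I must argue that under the quasi-static independent-PPP model the hop distances are mutually independent (boundary hops via type-\uppercase\expandafter{\romannumeral1}, interior hops via type-\uppercase\expandafter{\romannumeral2}), so that the expected reciprocal-latency of identically-distributed interior hops collapses to a single factor $(K_{\rm{THz}}-2)/\tau_{{\rm{THz}},2}$. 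Handling the order of the expectation over fading versus the expectation over distance — and confirming that taking the fading expectation first (as already adopted in (\ref{average tao})) is consistent with the distance-averaging performed here — is the delicate point that the proof must address carefully.
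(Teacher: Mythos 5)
Your proposal matches the paper's proof essentially step for step: the symmetry reduction to the two boundary hops (type-\uppercase\expandafter{\romannumeral1}) and $K_{\rm{THz}}-2$ interior hops (type-\uppercase\expandafter{\romannumeral2}), the identity $\mathbb{E}[X]=\int_0^\infty \bar{F}(x)\,\mathrm{d}x$ applied to the per-hop rate, the threshold event $\mathcal{X}_{\rm{THz}} \geq m^*(\rho,t)$ resolved via the $\alpha$-$\mu$ CCDF into the incomplete-Gamma factor, and the final integration against $f_{\rm{THz}}^{(k)}(\rho\,|\,R/K_{\rm{THz}})$. The interchange issue you flag at the end is moot here: the theorem by construction plugs the per-hop \emph{expected} throughputs $\tau_{{\rm{THz}},1},\tau_{{\rm{THz}},2}$ into the harmonic-sum formula, so only the marginal distance distributions are needed and no mutual-independence or expectation-ordering argument is required (independence across hops only becomes essential in the product form of the coverage result, Theorem~\ref{theorem4}).
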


\begin{theorem}[Throughput of RF Networks]\label{theorem3}
Given that the stepwise-suboptimal maximum throughput routing strategy is applied, the throughput of RF networks, denoted as $\tau_{\rm{RF}}^T$, is given by,
\begin{IEEEeqnarray}{RCL}
    \tau_{\rm{RF}}^T &=& \frac{1}{\frac{2}{\tau_{{\rm{RF}},1}}+\frac{K_{\rm{RF}}-2}{\tau_{{\rm{RF}},2}}} \\
    &=& \frac{{\tau_{{\rm{RF}},1}}\cdot {\tau_{{\rm{RF}},2}}}{2{\tau_{{\rm{RF}},2}}+\left( K_{\rm{RF}}-2 \right){\tau_{{\rm{RF}},1}}},
\end{IEEEeqnarray}
where ${\tau_{{\rm{RF}},k}}$ is expressed as,
\begin{IEEEeqnarray}{RCL}
    &\tau_{{\rm RF},k} =& \int_{0}^{\infty }\int_{0}^{\infty} f_{{\rm{RF}}}^{(k)}\left(\rho \bigg| \frac{R}{K_{\rm{RF}}} \right)  \notag \\
    &\times  \exp & \left(  -{(2^{\frac{t}{B_{\rm RF}}}-1) \frac{K}{S} \rho^{\beta_{\rm{RF}}}  \sigma_{\rm{RF}}^2 }  { G_{{\rm{RF}}}^{-1} \eta_{{\rm{RF}}}^{-1}} \right) {\rm d} \rho {\rm d} t, \IEEEeqnarraynumspace 
\end{IEEEeqnarray}
where $k=1,2$, $f_{{\rm{RF}}}^{(1)}\left(\rho \big| \frac{R}{K_{\rm{RF}}} \right)$ and $f_{{\rm{RF}}}^{(2)}\left(\rho \big| \frac{R}{K_{\rm{RF}}} \right)$ are defined in (\ref{PDF_1}) and (\ref{PDF_2}), respectively.
\begin{proof}
See Appendix~\ref{app:theorem3}.
\end{proof}
\end{theorem}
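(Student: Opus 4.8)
The plan is to mirror the route of Theorem~\ref{theorem2}, exploiting the fact that the unit-mean exponential fading of the RF channel collapses the fading average into a single CCDF evaluation. First I would justify the harmonic-mean decomposition. Under the stepwise-suboptimal strategy (Definition~\ref{def4}), the $K_{\mathrm{RF}}-1$ relays sit at the nodes nearest to the uniform ideal positions, and every hop is assigned power $S/K_{\mathrm{RF}}$. By Definition~\ref{def3}, the first and last hops have distances governed by the type-I law $f_{\mathrm{RF}}^{(1)}(\rho\,|\,R/K_{\mathrm{RF}})$, whereas the $K_{\mathrm{RF}}-2$ interior hops follow the type-II law $f_{\mathrm{RF}}^{(2)}(\rho\,|\,R/K_{\mathrm{RF}})$. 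Substituting the expected per-hop throughput into the latency aggregation (\ref{total tau}) and grouping the two boundary hops (which share a common value $\tau_{\mathrm{RF},1}$ by symmetry) with the $K_{\mathrm{RF}}-2$ interior hops (sharing $\tau_{\mathrm{RF},2}$) immediately gives $\tau_{\mathrm{RF}}^T=\left(\tfrac{2}{\tau_{\mathrm{RF},1}}+\tfrac{K_{\mathrm{RF}}-2}{\tau_{\mathrm{RF},2}}\right)^{-1}$, which is the first displayed identity; clearing denominators yields the second.

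The core of the argument is evaluating the single-hop expected throughput $\tau_{\mathrm{RF},k}=\mathbb{E}\!\left[B_{\mathrm{RF}}\log_{2}(1+\mathrm{SNR}_{\mathrm{RF},k})\right]$, where the expectation is taken jointly over the random hop distance $\rho\sim f_{\mathrm{RF}}^{(k)}$ and the fading $\mathcal{X}_{\mathrm{RF}}$. I would invoke the tail-integral representation of the mean of a nonnegative variable, $\tau_{\mathrm{RF},k}=\int_{0}^{\infty}\mathbb{P}\!\left(B_{\mathrm{RF}}\log_{2}(1+\mathrm{SNR}_{\mathrm{RF},k})>t\right)\mathrm{d}t$. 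The event inside the probability is equivalent to $\{\mathrm{SNR}_{\mathrm{RF},k}>2^{t/B_{\mathrm{RF}}}-1\}$, so the logarithm is eliminated and what remains is a coverage-type computation at the threshold $\theta(t)=2^{t/B_{\mathrm{RF}}}-1$.

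Next I would condition on the hop distance $\rho$ and insert the RF SNR from (\ref{SNR RF}) with $s_k=S/K_{\mathrm{RF}}$. The SNR constraint rearranges to a fading threshold, $\mathcal{X}_{\mathrm{RF}}>\theta(t)\,\sigma_{\mathrm{RF}}^2\,K_{\mathrm{RF}}\,\rho^{\beta_{\mathrm{RF}}}/(S\,G_{\mathrm{RF}}\,\eta_{\mathrm{RF}})$, and since $\bar{F}_{\mathcal{X}_{\mathrm{RF}}}(m)=\exp(-m)$, the conditional coverage probability is exactly the exponential factor appearing in the statement. Finally I would de-condition by integrating against $f_{\mathrm{RF}}^{(k)}(\rho\,|\,R/K_{\mathrm{RF}})$ over $\rho$ and against $t$ over $(0,\infty)$, producing the stated double integral for both $k=1$ and $k=2$.

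The step requiring the most care is the first: replacing each random per-hop throughput by its mean inside the harmonic aggregation is a modeling substitution rather than an exact identity, because $\mathbb{E}\!\left[(\sum_k 1/\tau_{\mathrm{RF},k})^{-1}\right]\neq(\sum_k 1/\mathbb{E}[\tau_{\mathrm{RF},k}])^{-1}$ in general. I would therefore make explicit that $\tau_{\mathrm{RF}}^T$ denotes the throughput obtained by aggregating the \emph{mean} per-hop rates, consistent with Definition~\ref{def4} and the buffering-latency interpretation of (\ref{total tau}). Every remaining step is strictly simpler than in the THz case of Theorem~\ref{theorem2}: the unit-mean exponential CCDF replaces the incomplete-Gamma average of the $\alpha$-$\mu$ model, so the fading integral closes in one line and only the distance integral against $f_{\mathrm{RF}}^{(k)}$ persists.
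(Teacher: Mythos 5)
Your proposal matches the paper's own argument essentially step for step: the paper's Appendix for this theorem simply reuses the THz derivation (harmonic aggregation of the two boundary-hop and $K_{\rm{RF}}-2$ interior-hop mean rates, the tail-integral representation of the expectation, conditioning on the hop distance, and the unit-mean exponential CCDF closing the fading average), which is exactly your route. Your closing caveat---that aggregating \emph{mean} per-hop rates in the harmonic formula is a modeling convention rather than an exact identity for the expected total throughput---is a fair point the paper passes over silently, but it does not change the approach.
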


\begin{table*}
\vspace{-4mm}
\begin{IEEEeqnarray}{RCL} \label{CP of THz}
    P_{\rm{THz}}^C &=& \left ( \int_{0}^{\infty} \frac{f_{{\rm{THz}}}^{(1)}\left(\rho \big| \frac{R}{K_{\rm{THz}}} \right)}{\Gamma\left(\mu\right)} {\Gamma \left(\mu, \mu \, \left(  \frac{K_{\rm{THz}} \gamma_{\rm{THz}} \rho^2 \exp\left(\beta_{\rm{THz}} \rho \right) \sigma_{\rm{THz}}^2 }  {S G_{{\rm{THz}}} \eta_{{\rm{THz}}}} \right)^{\frac{\alpha}{2}} \right)} {\rm d} \rho\right )^2 \notag \\
    & & \times \left ( \int_{0}^{\infty} \frac{f_{{\rm{THz}}}^{(2)}\left(\rho \big| \frac{R}{K_{\rm{THz}}} \right)}{\Gamma\left(\mu\right)} {\Gamma \left(\mu, \mu \, \left(  \frac{K_{\rm{THz}} \gamma_{\rm{THz}} \rho^2 \exp\left(\beta_{\rm{THz}} \rho \right) \sigma_{\rm{THz}}^2 }  {S G_{{\rm{THz}}} \eta_{{\rm{THz}}}} \right)^{\frac{\alpha}{2}} \right)} {\rm d} \rho\right )^{K_{\rm{THz}}-2}.
\end{IEEEeqnarray}

\begin{IEEEeqnarray}{RCL} \label{CP of RF}
       P_{\rm{RF}}^C  = \left ( \int_{0}^{\infty}  \!\!\!f_{\rm{RF}}^{(1)}\left(\! \rho \bigg| \frac{R}{K_{\rm{RF}}} \! \right)  \! \exp\left(\! - \!\frac{K_{\rm{RF}}}{S} {\gamma_{\rm{RF}} \rho^{\beta_{\rm{RF}}}  \sigma_{\rm{RF}}^2 }  {G_{{\rm{RF}}}^{-1} \eta_{{\rm{RF}}}^{-1}} \right) \! {\rm d} \rho \! \right )^{\!2} \!  \left ( \int_{0}^{\infty} \!\!\! f_{\rm{RF}}^{(2)}\!\left(\!\rho \bigg| \frac{R}{K_{\rm{RF}}}\!\right) \!\exp\left(\! -\! \frac{K_{\rm{RF}}}{S} {\gamma_{\rm{RF}} \rho^{\beta_{\rm{RF}}}  \sigma_{\rm{RF}}^2 }  {G_{{\rm{RF}}}^{-1} \eta_{{\rm{RF}}}^{-1}} \right) \! {\rm d} \rho \! \right )^{\! K_{\rm{RF}}-2}. \IEEEeqnarraynumspace 
\end{IEEEeqnarray}
\noindent\rule{\linewidth}{0.2mm}
\vspace{-6mm}
\end{table*}

\vspace{-3mm}

\subsection{Coverage Analysis Under the Low SNR}
The coverage probability represents the probability that every hop in the multi-hop routing can maintain a stable connection, which can be mathematically expressed as $\mathbb{P} [{\rm{SNR}} > \gamma_Q], \ \forall k \leq K_Q$, where $Q=\{ \rm{THz}, \rm{RF}\}$. $\gamma_Q$ is called the SNR threshold, which is the threshold of being connected. The coverage probabilities of the THz and RF networks are respectively provided in the following two theorems.

\begin{theorem}[Coverage Probability of THz Networks]\label{theorem4}
Given that the stepwise-suboptimal maximum throughput routing strategy is applied, the coverage probability of THz networks, denoted as $P_{\rm{THz}}^C$, is given by (\ref{CP of THz}), where $k=1,2$, $f_{\rm{THz}}^{(1)}\left(\rho \big| \frac{R}{K_{\rm{THz}}} \right)$ and $f_{\rm{THz}}^{(2)}\left(\rho \big| \frac{R}{K_{\rm{THz}}} \right)$ are defined in (\ref{PDF_1}) and (\ref{PDF_2}), respectively.
\begin{proof}
See Appendix~\ref{app:theorem4}.
\end{proof}
\end{theorem}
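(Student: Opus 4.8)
The plan is to compute the coverage probability $P_{\rm{THz}}^C = \mathbb{P}[\,\rm{SNR}_{\rm{THz},k} > \gamma_{\rm{THz}},\ \forall k \leq K_{\rm{THz}}\,]$ by exploiting the independence of the hops. Under the stepwise-suboptimal strategy, each hop carries power $S/K_{\rm{THz}}$ and the relay positions are placed at the nearest PPP points to equally spaced ideal locations on the segment of length $R$. The small-scale fadings $\mathcal{X}_{\rm{THz}}$ on distinct hops are independent, and the per-hop distances are independent random variables governed by the distance distributions of Lemmas~\ref{lemma1} and~\ref{lemma2}. Because the relay nodes on different hops are (with the probability guaranteed by the criterion~(\ref{criteria})) distinct points of the PPP, the per-hop coverage events are mutually independent, so the overall coverage probability factorizes into a product of per-hop coverage probabilities. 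The first and last hops follow the type-\uppercase\expandafter{\romannumeral1} distance distribution and the remaining $K_{\rm{THz}}-2$ middle hops follow the type-\uppercase\expandafter{\romannumeral2} distribution, which produces the two distinct factors with exponents $2$ and $K_{\rm{THz}}-2$ in~(\ref{CP of THz}).

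The key steps, carried out in order, would be as follows. First I would condition on the per-hop distance $\rho$ and write the single-hop coverage event $\{\rm{SNR}_{\rm{THz},k} > \gamma_{\rm{THz}}\}$ explicitly using~(\ref{SNRTHz}) with $s_k = S/K_{\rm{THz}}$; solving for the fading variable reduces this to $\{\mathcal{X}_{\rm{THz}} > c(\rho)\}$, where $c(\rho) = K_{\rm{THz}}\gamma_{\rm{THz}}\rho^2 \exp(\beta_{\rm{THz}}\rho)\sigma_{\rm{THz}}^2 / (S\,G_{\rm{THz}}\eta_{\rm{THz}})$. Second, I would apply the CCDF of the $\alpha$--$\mu$ distribution stated in the THz channel model, namely $\bar{F}_{\mathcal{X}_{\rm{THz}}}(m) = \Gamma(\mu, \mu(m/\overline{m})^{\alpha/2})/\Gamma(\mu)$ (taking $\overline{m}=1$ as normalized), which immediately yields the incomplete Gamma expression appearing inside the integrals of~(\ref{CP of THz}). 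Third, I would deconditioning on $\rho$ by integrating against the appropriate PDF: $f_{\rm{THz}}^{(1)}(\rho\,|\,R/K_{\rm{THz}})$ for the endpoint hops and $f_{\rm{THz}}^{(2)}(\rho\,|\,R/K_{\rm{THz}})$ for the middle hops. Finally, invoking independence across hops, I would raise the endpoint-hop integral to the power $2$ and the middle-hop integral to the power $K_{\rm{THz}}-2$, giving the stated product form.

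The step I expect to be the main obstacle is rigorously justifying the factorization into independent per-hop events. The distances of consecutive hops share a common endpoint relay node, so one must argue carefully that the \emph{nearest-point} distances entering consecutive hops are independent given the ideal anchor positions; this follows from the independence properties of the PPP over disjoint regions, but the shared relay means the two type-\uppercase\expandafter{\romannumeral1} and type-\uppercase\expandafter{\romannumeral2} distances are built from overlapping geometric data and require the criterion~(\ref{criteria}) to ensure adjacent hops almost surely select distinct relays. I would handle this by appealing to the earlier tradeoff discussion: under~(\ref{criteria}) the probability of relay collision is negligible, and conditioned on distinct relays the fading variables and the nearest-neighbor distances in disjoint neighborhoods are genuinely independent, so the product decomposition holds. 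The remaining steps—the SNR inversion and the substitution of the $\alpha$--$\mu$ CCDF—are routine and mirror the throughput derivation of Theorem~\ref{theorem2}, with the threshold $\gamma_{\rm{THz}}$ playing the role of $2^{t/B_{\rm{THz}}}-1$ so that the coverage result is essentially the instantaneous-rate specialization of the throughput integral.
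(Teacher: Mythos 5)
Your proposal is correct and follows essentially the same route as the paper's proof in Appendix~H: factorize over hops by independence, invert the SNR to a fading threshold $\mathcal{C}(\rho)$, apply the $\alpha$--$\mu$ CCDF to obtain the incomplete Gamma term, and decondition against $f_{\rm{THz}}^{(1)}$ for the two endpoint hops and $f_{\rm{THz}}^{(2)}$ for the $K_{\rm{THz}}-2$ middle hops. The only difference is that the paper simply \emph{assumes} hop independence in one sentence, whereas you flag and argue the distinct-relay condition via criterion~(\ref{criteria}) --- a more careful treatment of the same step, not a different approach.
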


\begin{theorem}[Coverage Probability of THz Networks]\label{theorem5}
Given that the stepwise-suboptimal maximum throughput routing strategy is applied, the coverage probability of THz networks, denoted as $P_{\rm{RF}}^C$, is given by (\ref{CP of RF}),
where $f_{\rm{RF}}^{(1)}\left(\rho \big| \frac{R}{K_{\rm{RF}}} \right)$ and $f_{\rm{RF}}^{(2)}\left(\rho \big| \frac{R}{K_{\rm{RF}}} \right)$ are defined in (\ref{PDF_1}) and (\ref{PDF_2}), respectively.
\begin{proof}
The proof of Theorem~\ref{theorem5} is similar to that of Theorem~\ref{theorem3} and Theorem~\ref{theorem4}, therefore omit here.
\end{proof}
\end{theorem}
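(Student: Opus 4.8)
The plan is to follow the same two-stage template used for Theorems~\ref{theorem3} and~\ref{theorem4}: first reduce the joint coverage event to a product of per-hop coverage probabilities, and then evaluate each per-hop factor using the exponential RF fading together with the appropriate hop-distance distribution. By definition, $P_{\rm{RF}}^C = \mathbb{P}[{\rm{SNR}}_{{\rm{RF}},k} > \gamma_{\rm{RF}}, \ \forall k \le K_{\rm{RF}}]$. Under the stepwise-suboptimal strategy of Definition~\ref{def4}, each hop is allocated power $S/K_{\rm{RF}}$, and the per-hop fading variables $\mathcal{X}_{\rm{RF}}$ are i.i.d.\ across hops. Invoking the independence approximation across hops---which is precisely what criterion (\ref{criteria}) is designed to justify, since it guarantees that adjacent hops select distinct relay nodes with high probability---the joint connection event factorizes, so that $P_{\rm{RF}}^C$ becomes the product of the individual per-hop coverage probabilities.

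Next I would compute a single per-hop factor conditioned on the hop distance $\rho$. Substituting the instantaneous RF SNR expression (\ref{SNRRF}) with per-hop power $s_k = S/K_{\rm{RF}}$ into the event ${\rm{SNR}}_{\rm{RF}} > \gamma_{\rm{RF}}$ and solving for the fading yields $\mathbb{P}[{\rm{SNR}}_{\rm{RF}} > \gamma_{\rm{RF}} \mid \rho] = \bar{F}_{\mathcal{X}_{\rm{RF}}}\!\left( \frac{K_{\rm{RF}} \gamma_{\rm{RF}} \rho^{\beta_{\rm{RF}}} \sigma_{\rm{RF}}^2}{S G_{\rm{RF}} \eta_{\rm{RF}}} \right)$. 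Because $\mathcal{X}_{\rm{RF}}$ is unit-mean exponential with $\bar{F}_{\mathcal{X}_{\rm{RF}}}(m) = \exp(-m)$, this conditional factor collapses to the clean exponential $\exp\!\left( -\frac{K_{\rm{RF}}}{S} \gamma_{\rm{RF}} \rho^{\beta_{\rm{RF}}} \sigma_{\rm{RF}}^2 G_{\rm{RF}}^{-1} \eta_{\rm{RF}}^{-1} \right)$, which is exactly the integrand appearing in (\ref{CP of RF}).

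I would then decondition on the hop distance. The first and last hops connect a fixed endpoint (source or target) to the nearest relay of an ideal position, so their distances follow the Type-I distribution $f_{\rm{RF}}^{(1)}(\rho \mid R/K_{\rm{RF}})$ of Lemma~\ref{lemma1}; the remaining $K_{\rm{RF}}-2$ interior hops connect two nearest-relay points and hence follow the Type-II distribution $f_{\rm{RF}}^{(2)}(\rho \mid R/K_{\rm{RF}})$ of Lemma~\ref{lemma2}. Averaging the exponential per-hop factor against each distribution and collecting the two Type-I hops and the $K_{\rm{RF}}-2$ Type-II hops produces the squared Type-I integral times the $(K_{\rm{RF}}-2)$-th power of the Type-II integral, which is precisely (\ref{CP of RF}).

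The main obstacle is not the per-hop algebra---the exponential fading makes each factor elementary, in contrast to the incomplete-Gamma factors required in the THz case of Theorem~\ref{theorem4}---but rather the justification of the product factorization. Since consecutive hops share a relay node, the hop distances are genuinely dependent, so the factorization is an approximation whose accuracy rests on criterion (\ref{criteria}) keeping the probability of repeated-relay selection negligible. Making this independence step airtight (or explicitly bounding the error it introduces) is the only delicate point, and it is handled identically to the corresponding step in the proof of Theorem~\ref{theorem4}.
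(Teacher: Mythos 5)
Your proposal is correct and follows essentially the same route as the paper's (omitted) proof: the paper obtains Theorem~\ref{theorem5} by combining the per-hop product factorization from the proof of Theorem~\ref{theorem4} (where independence across hops is simply assumed) with the exponential-CCDF computation $\bar{F}_{\mathcal{X}_{\rm{RF}}}(m)=\exp(-m)$ from the proof of Theorem~\ref{theorem3}, yielding the squared Type-I integral times the $(K_{\rm{RF}}-2)$-th power of the Type-II integral, exactly as you describe. Your closing caveat that the factorization is only an approximation justified by criterion~(\ref{criteria}) is more candid than the paper, which asserts independence without quantifying the error, but it does not change the argument.
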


Since a high throughput guarantees a stable connection, it is crucial to analyze the coverage probability under low SNR. 
{\color{black} In general, we assume that $\gamma_{\rm{THz}} B_{\rm{THz}} = \gamma_{\rm{RF}} B_{\rm{RF}}$ to ensure a fair comparison between THz and RF networks. Both SNR thresholds and bandwidths are constant system parameters and do not vary with the positions of the relay nodes.}
There are two justifications for this assumption. Firstly, according to the Shannon-Hartley theorem under low SNR, $\tau = B \log_2 ( {1+{\rm{SNR}}}) \approx \frac{B}{\ln 2} {\rm{SNR}}$. Consequently, a larger bandwidth allows a lower SNR threshold due to the increased capacity of the channel.
Secondly, the THz signal can be obtained by spreading the RF signal spectrum at the source node, and only the noise power within the RF band remains after the spread spectrum decomposition at the target node. 
As a result, the received noise power of THz becomes $\frac{B_{\rm{RF}}}{B_{\rm{THz}}}$ times that of the RF, which leads to the SNR threshold of THz becoming $\frac{B_{\rm{THz}}}{B_{\rm{RF}}}$ times that of the RF.

\vspace{-0.1cm}
\section{Numerical Results} \label{section5}
In this section, we validate theorems and propositions using Monte Carlo simulations and demonstrate the advantages of the proposed routing strategies. Moreover, we contrast THz and RF transmissions and integrate multi-hop THz routing with UAV communications. Additionally, the values of parameters are given in Table~\ref{table1}, unless otherwise specified.

\begin{table}[t]
\centering
\caption{{\color{black}Summary of Parameters \cite{gordon2017hitran2016, lou2023coverage}.}}
\label{table1}
\resizebox{\linewidth}{!}{
 \renewcommand{\arraystretch}{1.2}
\begin{tabular}{|c|c|c|}
\hline
Description      & Parameter        & Default Value      \\ \hline \hline
Antenna gain of THz/RF           &  $G_{\rm{THz}}$, $G_{\rm{RF}}$        & $20$, $0~$[dBi]   \\ \hline
Transmit frequency of THz/RF     & $\upsilon_{\rm{THz}}$, $\upsilon_{\rm{RF}}$ & $1~$THz, $2.1~$GHz \\ \hline
Mean additional losses of THz/RF  &$\eta_{\rm{THz}}$,  $\eta_{\rm{RF}}$        & $-93$, $-39$~[dB]   \\ \hline
Absorption coefficient of THz & $\beta _{\rm{THz}}$ & 0.05~/m \\ \hline
Path loss exponent of RF   & $\beta_{\rm{RF}}$   & 2.5        \\ \hline
Bandwidth of THz/RF & $B_{\rm{THz}}$, $B_{\rm{RF}}$          & $500$, $40$~[MHz]  \\ \hline
Environmental noise of THz/RF  & $\sigma_{\rm{THz}}^2$, $\sigma_{\rm{RF}}^2$ & $-107$, $-128$~[dBm]  \\ \hline
Parameters of $\alpha-\mu$ distribution & $\alpha$, $\mu$     & 2, 4       \\ \hline
Density of THz/RF relay nodes & $\lambda_{\rm{THz}}$, $\lambda_{\rm{RF}}$    &   $10$, $0.5$~[$\cdot 10^{-3}/\rm{m}^{2}$]   \\ \hline
SNR threshold of THz/RF & $\gamma_{\rm{THz}}$, $\gamma_{\rm{RF}}$    &   $0$, $0$~[dB]   \\ \hline
\end{tabular}}
\vspace{-0.4cm}
\end{table}

\vspace{-2mm}
\subsection{Strategies Comparison}
In this subsection, we introduce five routing strategies presented in Fig.~\ref{fig1} and Fig.~\ref{fig2}. Subsequently, we verify the accuracy of theorems and propositions through Monte Carlo simulations. Finally, we compare the throughput performances of different strategies.

\par
In this paper, we propose three routing strategies: ideal (Theorem~\ref{theorem1}), stepwise-optimal (Definition~\ref{def3}), and stepwise-suboptimal (Definition~\ref{def4}) maximum throughput routing strategies. 
{\color{black}To evaluate the performance of the proposed routing strategies, we compare them with existing SG-based routing strategies, most of which fall into either long-hop \cite{wang2023reliability, wang2022stochastic, farooq2015stochastic} or short-hop \cite{sasaki2017energy, farooq2015stochastic, banaei2014asymptotic} categories.
In the long-hop routing strategy, the source or relay node selects the relay node that is closest to the destination within a fixed radius, which is set to $40$~m for THz and $400$~m for RF. If the destination falls within this radius, it is selected directly.
Conversely, the short-hop routing strategy involves either the source or a relay node selecting the nearest node as the next-hop relay node, provided that the node's direction angle is less than ${\pi}/{4}$. The direction angle is defined by considering the relay node (or source node) of the previous hop as the common vertex, and creating two rays: one extending from this vertex to the next-hop relay node, and another from the vertex to the target node. This criterion ensures that routing does not deviate significantly from the intended direction.}

\begin{figure*}[ht!]
\vspace{-5mm}
\begin{minipage}[t]{0.32\linewidth}
\centering
\includegraphics[width=\linewidth]{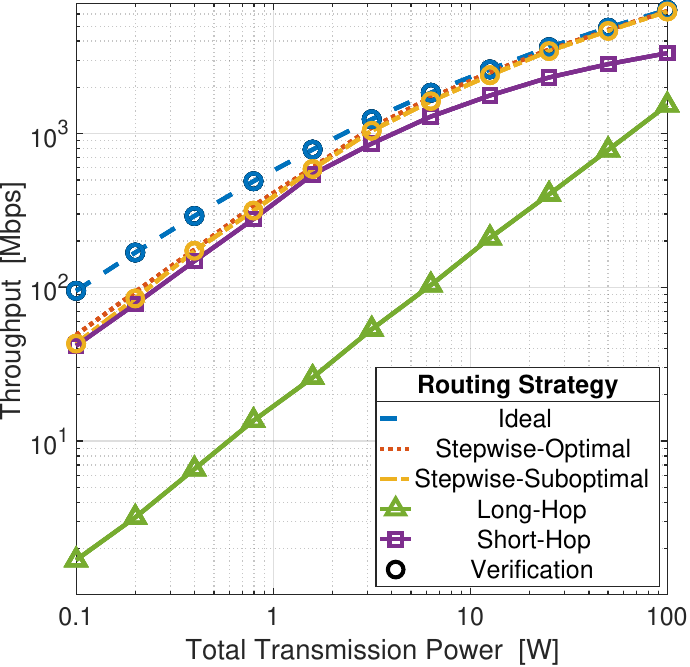}
\caption{Strategies comparison for THz transmission.}
\label{fig1}
\end{minipage}
\hfill
\begin{minipage}[t]{0.32\linewidth}
\centering
\includegraphics[width=\linewidth]{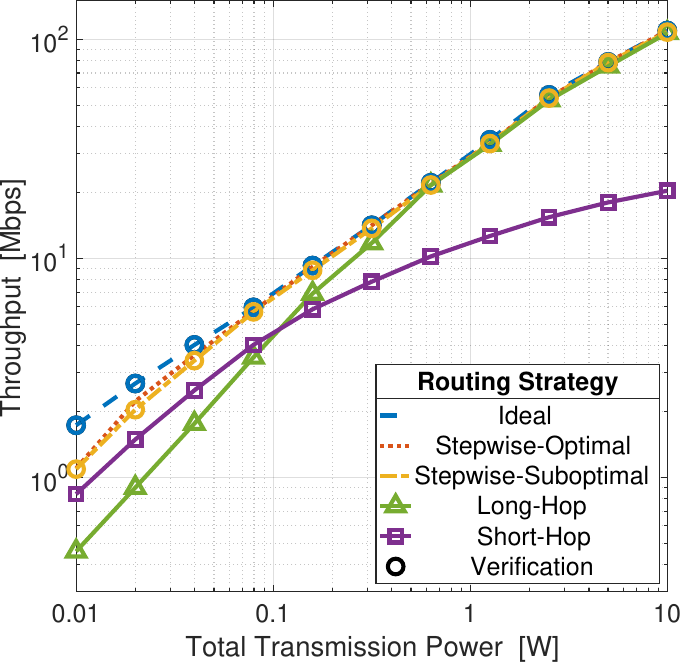}
\caption{Strategies comparison for RF transmission.}
\label{fig2}
\end{minipage}
\hfill
\begin{minipage}[t]{0.32\linewidth}
\centering
\includegraphics[width=\linewidth]{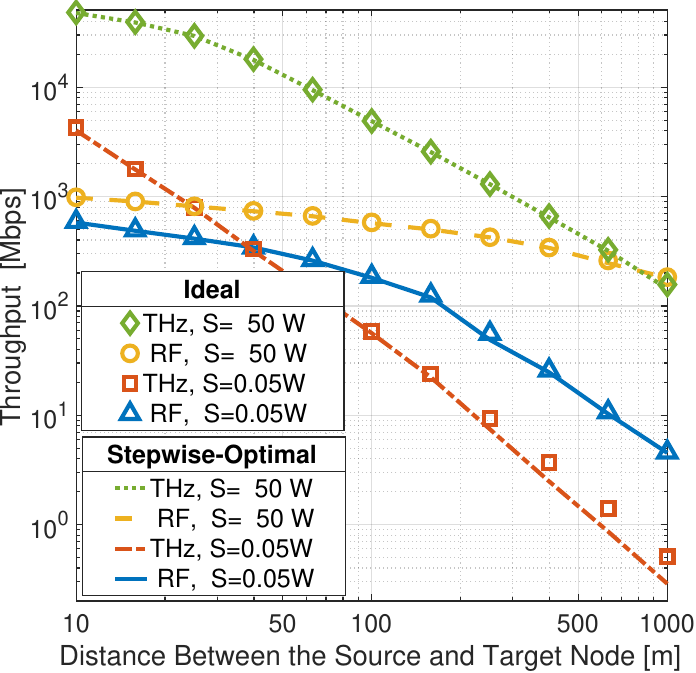}
\caption{Throughput Performance Comparison of THz and RF Transmission.}
\label{fig3}
\end{minipage}
\vspace{-5mm}
\end{figure*}


\par
To ensure comprehensive verification, we incorporate a wide range of transmission power levels, encompassing both low and high SNR conditions. 
The distances from the source node to the target node are $100$~m for the THz network and $1000$~m for the RF network. 
In Fig.~\ref{fig1} and Fig.~\ref{fig2}, the blue dashed lines are drawn according to Theorem.~\ref{theorem1}. For verification, we first fixed the number of hops, randomly generated the distance and transmission power of each hop, and measured the maximum average multi-hop throughput. Next, the optimal hop is determined by exhaustive search, and the maximum throughput with the optimal hop is marked as a blue circle and shown in Fig.~\ref{fig1} and Fig.~\ref{fig2}. The matching of blue circles and blue dashed lines proves the accuracy of the results in Section~\ref{section3}. Moreover, the yellow dash-dot lines are drawn according to the analytical expressions in Theorem~\ref{theorem2} and Theorem~\ref{theorem3}, whereas the yellow circles are obtained by simulation according to the description given by Definition.~\ref{def4}. The matching of yellow circles and yellow dash-dot lines proves the accuracy of Theorem~\ref{theorem2} and Theorem~\ref{theorem3}. Furthermore, the analytical expressions in Theorem~\ref{theorem4} and Theorem~\ref{theorem5} are also validated using this approach.

\par
{\color{black}
Then, we compare the throughput of different strategies through Fig.~\ref{fig1} and Fig.~\ref{fig2}. 
As mentioned before, the ideal routing strategy serves as an upper bound for throughput. In most cases, the proposed stepwise-optimal routing strategy can approximate the ideal routing strategy in terms of throughput performance. At lower transmission powers, the ideal routing strategy has a large or potentially infinite number of hops, which is theoretically but not practically feasible. Under the restriction of criterion (\ref{criteria}), the stepwise-optimal routing strategy exhibits a minor difference in throughput compared to the ideal routing strategy.
Furthermore, the stepwise-suboptimal routing strategy can provide a tight lower bound to the stepwise-optimal routing strategy. Because the stepwise-suboptimal routing strategy is tractable, it can be a viable alternative to the stepwise-optimal routing strategy.  

\par
Notably, in THz communication, there is a bias toward the short-hop routing strategy, whereas, in RF communication, the short-hop routing strategy exhibits distinct advantages over the long-hop routing strategy primarily at lower transmission power levels. It is due to the logarithmic relationship of throughput with SNR, which leads to diminishing marginal returns on throughput as power increases. Overall, as demonstrated in  Fig.~\ref{fig1} and Fig.~\ref{fig2}, the ideal, stepwise-optimal, and stepwise-suboptimal routing strategies outperform the baseline strategies, i.e., the long-hop and short-hop routing strategies. This indicates that our strategies effectively optimize throughput through number of hops design, relay selection, and, importantly, power allocation, which is not considered in existing SG-based routing strategies.} 

\begin{figure*}[tbp]
\vspace{-6mm}
\begin{minipage}[t]{0.29\linewidth}
\centering
\includegraphics[width=\linewidth]{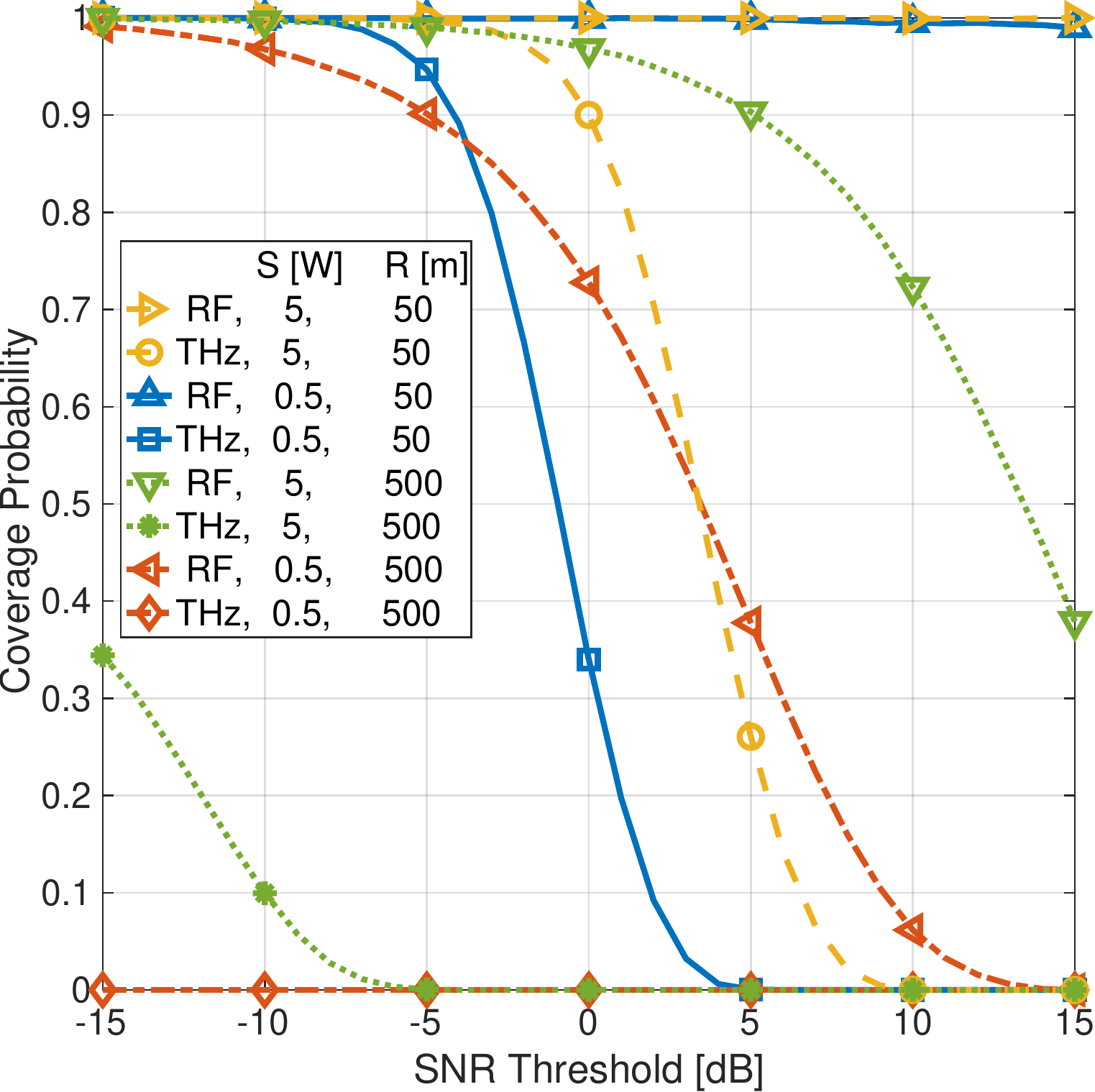}
\caption{Coverage Performance Comparison of THz and RF Transmission.}
\label{fig4}
\end{minipage}
\hfill
\begin{minipage}[t]{0.345\linewidth}
\centering
\includegraphics[width=\linewidth]{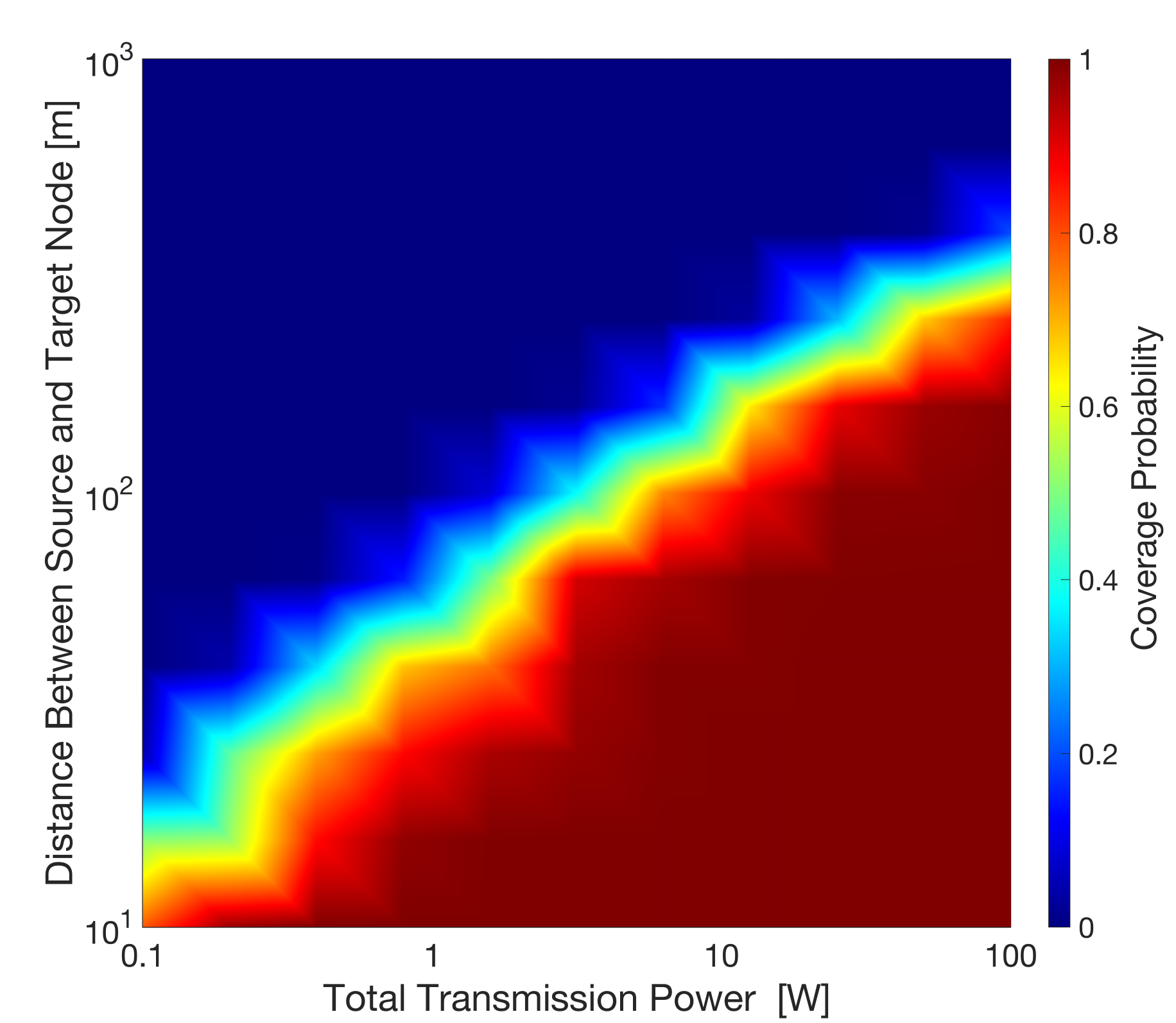}
\caption{Heat Map of Coverage Probability for THz Transmission.}
\label{fig5}
\end{minipage}
\hfill
\begin{minipage}[t]{0.345\linewidth}
\centering
\includegraphics[width=\linewidth]{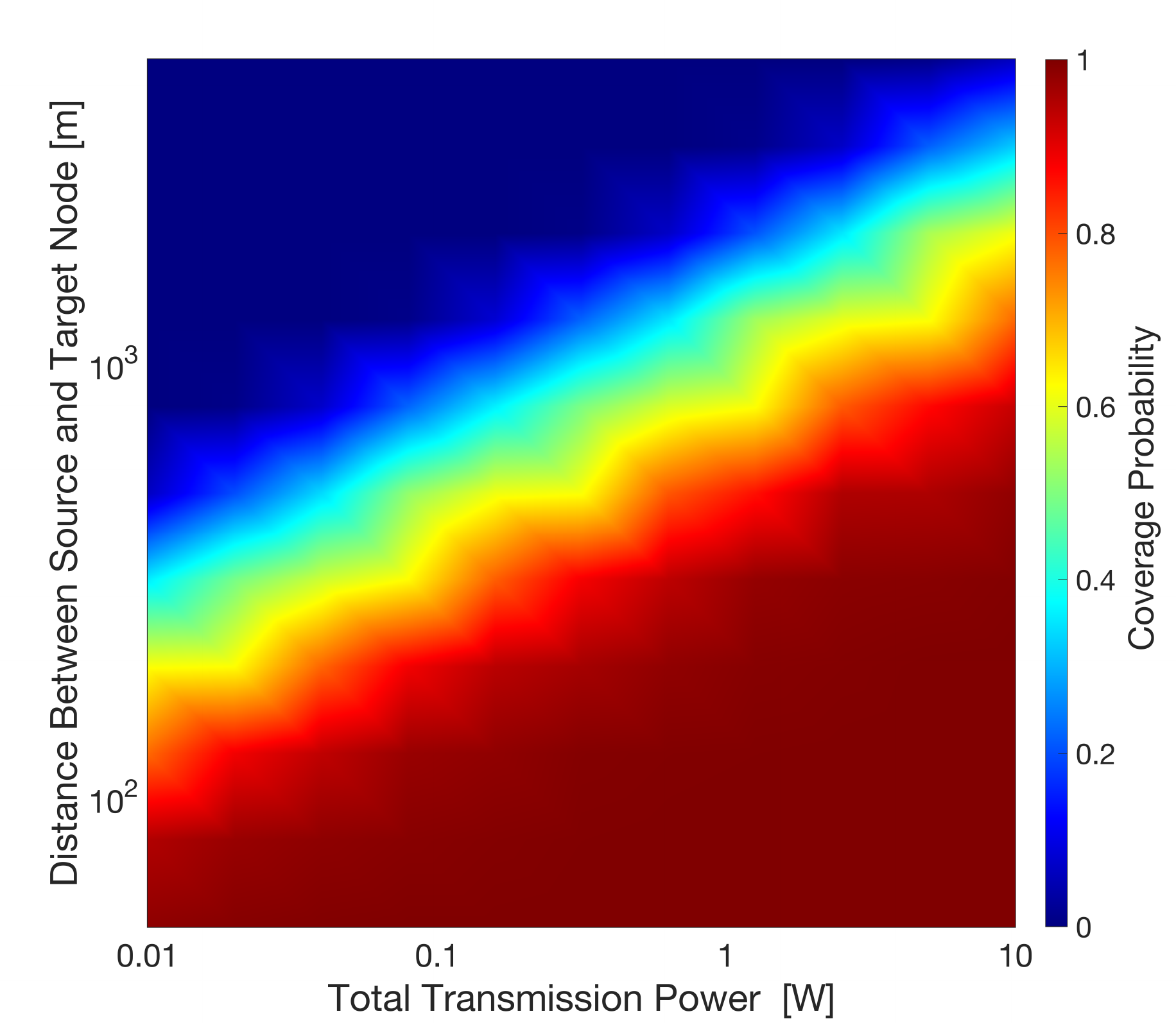}
\caption{Heat Map of Coverage Probability for RF Transmission.}
\label{fig6}
\end{minipage}
\vspace{-6mm}
\end{figure*}

\vspace{-2mm}
\subsection{Comparison between THz and RF Transmission}
In this subsection, we compare the throughput and coverage probability performance between THz and RF communication systems with different transmission power and communication distances.

\par
Fig.~\ref{fig3} illustrates the throughput performance for both THz and RF communications. The overlap between lines and points proves again that the proposed stepwise-optimal routing strategy can approach the ideal upper bound for throughput in most cases. For THz networks, when the power differs by $30$~dB, the throughput differs by tens to hundreds of times. However, for RF networks, a $30$~dB difference in power only results in a throughput difference of several to tens of times. A common characteristic of both networks is that as the distance increases, the gain from increasing power also increases. 
Furthermore, it is evident in Fig.~\ref{fig3} that the throughput of THz networks is more sensitive to changes in distance compared to RF networks. 
When transmitting at high power, THz routing consistently achieves higher throughput compared to RF due to its bandwidth advantage. However, when the transmission power is low, RF networks exhibit higher throughput than THz networks for transmission distances exceeding $50$~m.

In Fig.~\ref{fig4}, under the same condition, RF communication exhibits a much higher coverage probability than THz communication, indicating RF's superior reliability in scenarios requiring robust coverage. 
Even at a coverage threshold of $15$~dB, short-range RF communication can still achieve close to $100\%$ coverage probability. In contrast, even at a coverage threshold of $-15$~dB, the coverage probability of long-range THz communication is close to $0\%$. Furthermore, increasing the transmission power by ten times does not improve the coverage probability as much as reducing the distance to one-tenth of its original value. 

{\color{black} Combining the insights from Fig.~\ref{fig3} and Fig.~\ref{fig4}, we propose a THz and RF relay selection mechanism based on transmission power and communication distance. 
When sufficient power is available and the distance between the source and target nodes is relatively short (less than 500~m), THz is preferable due to its higher throughput enabled by a wide bandwidth. 
However, under power-limited scenarios, THz only outperforms RF within extremely short distances (below 50~m). As the distance increases, RF becomes more advantageous due to its superior signal penetration and lower path loss.}

\vspace{-0.3cm}
\subsection{Transmission Power Design}

{\color{black}
This subsection presents an application of the analytical results, that is, transmission power design. To investigate the suitable range of transmission power, a straightforward approach is examining the coverage probability under different sets of parameters, such as communication distances and transmission power. However, exhaustively evaluating all sets of parameters with traditional simulation methods is computationally expensive. On the contrary, system parameters can be mapped to coverage performance with low complexity using the analytical expressions provided in Theorem~\ref{theorem4} and Theorem~\ref{theorem5}, thus alleviating the computational burden. 
Although certain simplifications are made to enable analytical tractability, the results effectively characterize the statistical behavior of the network and highlight the main trends and trade-offs.

Fig.~6 and Fig.~7 represent heat maps of the coverage probability. 
The x-axis denotes the total transmission power allocated across all hops in the routing path. 
The blue parts in the upper left corner of the figures show that insufficient transmission power can result in communication interruptions. In contrast, the red parts in the lower right corner indicate that excessive power usage can lead to unnecessary energy waste. Therefore, the upper boundary of the dark red area represents the recommended combinations of communication distance and total transmission power, offering an effective balance between energy consumption and reliable communication. 
}

\vspace{-2mm}

\section{Insights on Open Issues and Research Directions} \label{section6}

This section explores several factors that align with the analytical framework in this paper. Combining these factors with the proposed framework presents promising future research directions.

\vspace{-2mm}

\subsection{UAV Communication}
{\color{black} UAVs have emerged as a versatile platform in wireless communication due to flexible deployment, attracting increasing research interest. For instance, age of information (AoI)-minimal clustering strategies have been proposed to reduce latency. \cite{abd2018average,hu2020aoi}. Additionally, joint trajectory and transmission optimization in UAV-assisted wireless-powered communication networks (WPCNs) has been studied to improve energy efficiency \cite{oubbati2021multi,liu2024aoi}. SG-based routing enables efficient performance evaluation and low-complexity route planning, making it suitable for energy-constrained UAV networks. 

In the following, we present an extended application, that is, throughput analysis in a UAV-based communication system. 
{\color{black} In simulation, the UAVs are considered to be multi-rotor platforms with hovering capability.}
We consider a ground source node transmitting signals to a ground target node through multi-hop relaying via the UAV network. The UAV-ground link establishes a LoS link with probability 
\begin{equation}
    P_{\rm{LoS}}(\theta_e) = \frac{1}{1 + a\exp \left( { - b\left( \theta_e - a \right)} \right)},
\end{equation}
where $a=25.27$ and $b=0.5$ are terrain-related parameters \cite{al2014optimal}. $\theta_e$ represents the elevation angle between the UAV and the ground source node or target node. The absorption coefficient is set to $0.005$~m$^{-1}$ for LoS and $0.5$~m$^{-1}$ for NLoS. In addition, the terrestrial source node adopts the maximum LoS probability association strategy to find the first relay UAV. The inter-UAV links are assumed to be LoS, thus the intermediate hops of routing (excluding the UAV-ground hops) can follow the proposed routing method and analytical framework in this article. In both Fig.~\ref{fig7} and Fig.~\ref{fig8}, routing is conducted via THz links.

\begin{figure}[tbp]
\vspace{-2mm}
\centering
\includegraphics[width=0.7\linewidth]{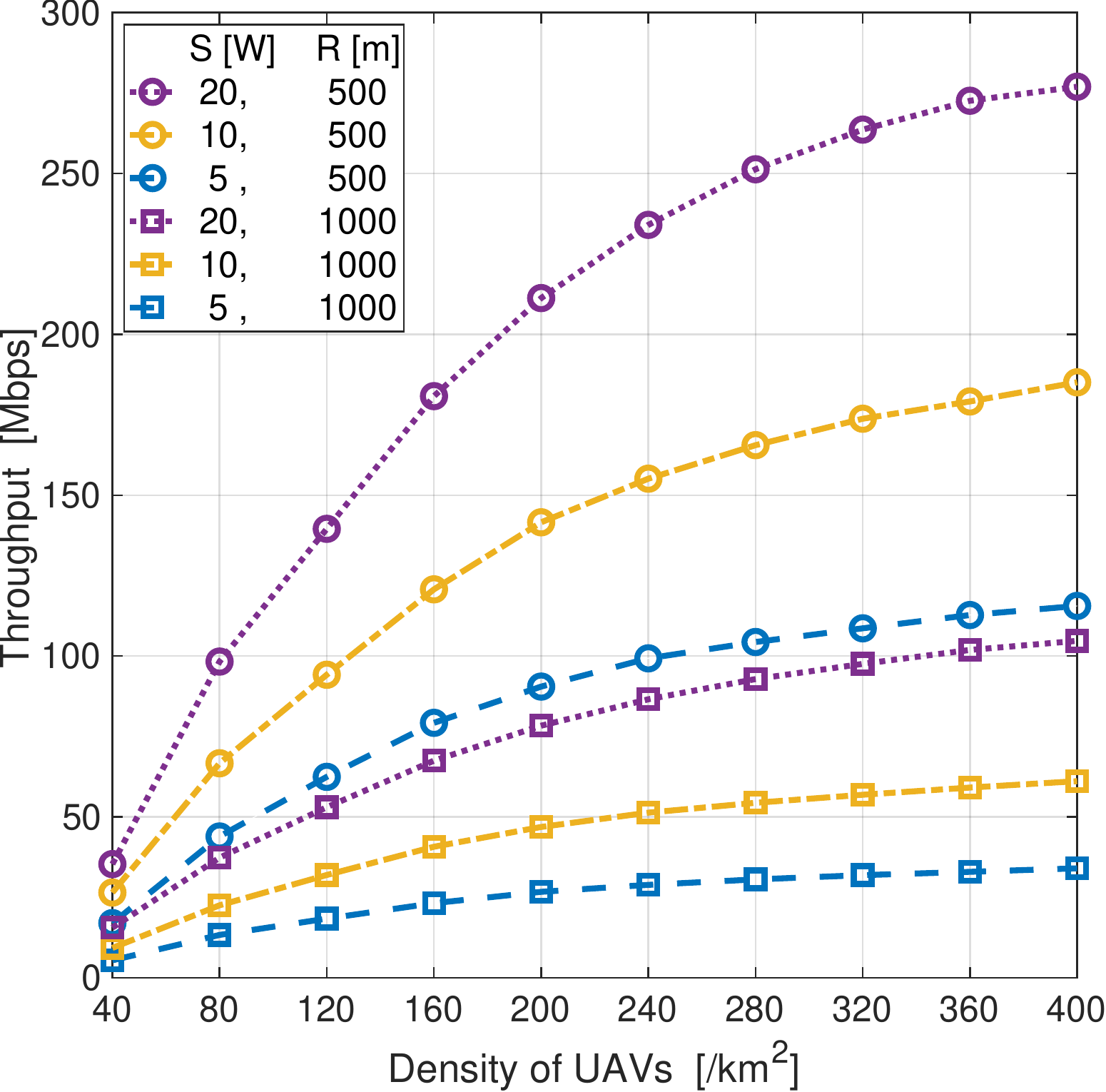}
\caption{Impact of UAV Density on Throughput Performance.}
\label{fig7}
\vspace{-2mm}
\end{figure}
\begin{figure}[tbp]
\vspace{-2mm}
\centering
\includegraphics[width=0.7\linewidth]{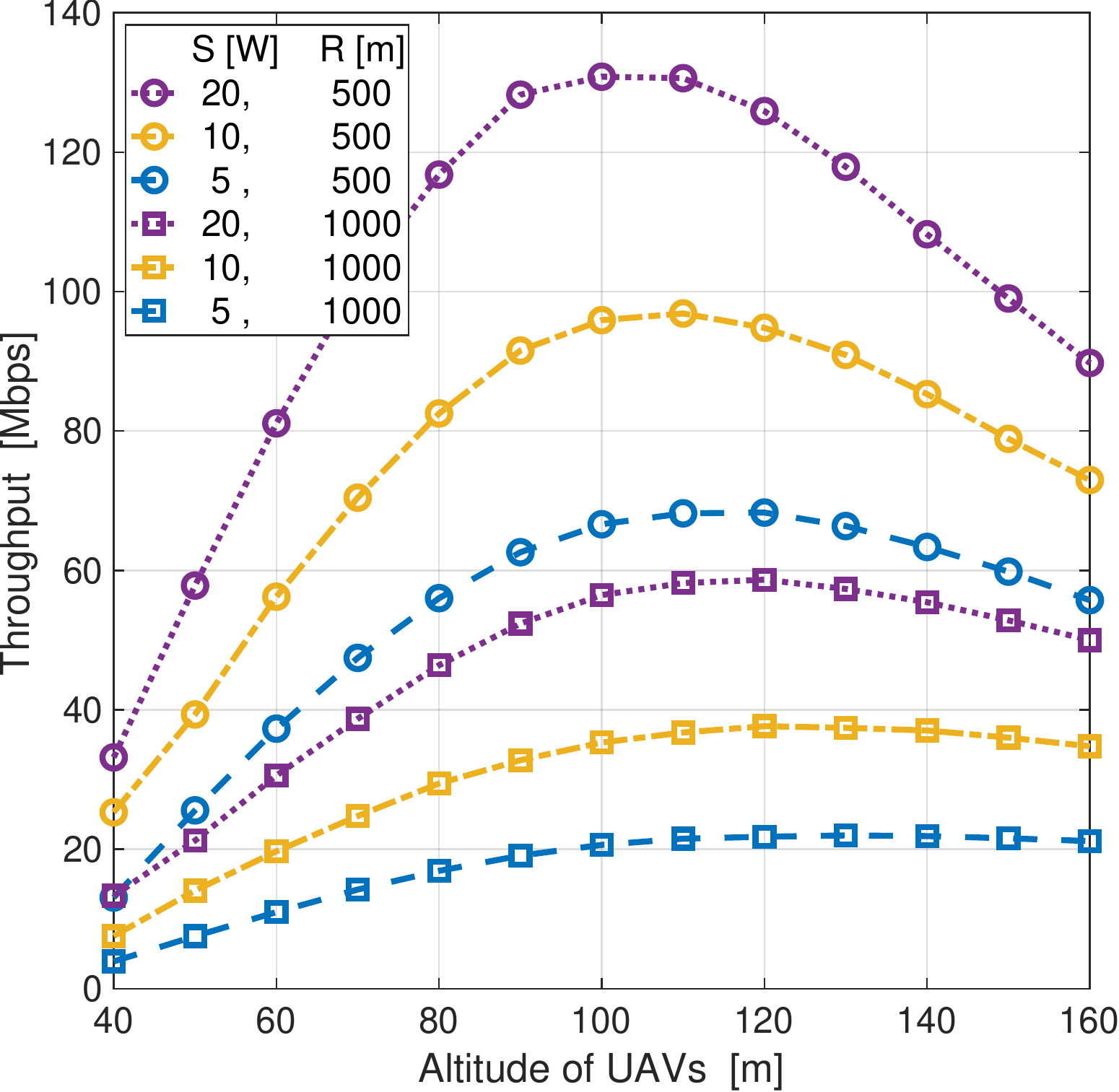}
\caption{Impact of UAV Altitude on Throughput Performance.}
\label{fig8}
\vspace{-4mm}
\end{figure}

As shown in Fig.~\ref{fig7} and Fig.~\ref{fig8}, the throughput decreases to one-third of the original when the communication distance doubles. When the transmission power doubles, the throughput increases to $1.4$~times the original. Fig.~\ref{fig7} shows the impact of adjusting UAV density, with UAVs' altitude fixed at $80$~m. Increasing UAV density allows for the selection of UAV relays closer to the ideal relay position, thereby augmenting throughput. Therefore, throughput increases with UAV density and converges to the performance of ideal routing. In Fig.~\ref{fig8}, we study the influence of UAV altitude when the density of UAVs is fixed at $100$~/km$^2$. 
We observe that the throughput peaks at a UAV altitude of approximately 110~m, which is independent of both the transmission power and the communication distance.
This indicates that when changing the UAV altitude, the throughput is mainly influenced by the UAV-ground link. Increasing the altitude enhances the LoS probability but diminishes the average path loss. The optimal altitude arises from a balance between the LoS probability and path loss.

Combining SG-based routing with UAV communication systems presents several promising research directions. Specifically, UAVs operate in three-dimensional (3D) space with altitude as a critical dimension. How to extend the proposed routing of two-dimensional planes to 3D is worth discussing. Moreover, UAV communication systems exhibit high mobility. Future research can explore analytical frameworks that integrate UAV velocity and trajectory planning into routing decisions.}


{\color{black}
\subsection{Multi-Source Multi-Destination Communication}

Multi-source multi-destination networks have the potential to significantly enhance spectral efficiency and system throughput through concurrent data flows \cite{ivanescu2022multi,liang2024securing,qu2020reliability}, thereby motivating increasing interest in investigating the advantages of relaying in such architectures. While this work analyzes the single-source single-destination scenario, the proposed SG-based routing framework can be extended to support multi-source multi-destination communications. When sufficient resources are available, each relay node is capable of participating in multiple transmissions simultaneously. Under this condition, the routing strategies developed in this work remain applicable without modification. 

Under resource-constrained conditions, where each node can support only a limited number of simultaneous connections due to hardware limitations, relay nodes involved in flow may become temporarily unavailable for new routes.
This interdependence across sessions introduces contention and must be accounted for in routing decisions. 
To address this, a potential approach is to model relay node availability as a function of traffic load. 
Specifically, the probability that a relay node is available for routing depends on the number of concurrent routes traversing the network. This affects the distribution of available relay nodes, which in turn modifies the single-hop distance distribution used in routing analysis. By deriving corresponding distance distribution, our analytical framework can be extended to multi-source multi-destination networks.}

\section{Conclusion} \label{section7}
\vspace{-1mm}
In this paper, we first derived the maximum throughput strategy for multi-hop routing under ideal scenarios. By making small adjustments to the above strategy, a practical strategy for THz and RF networks called the stepwise-optimal routing strategy was proposed. It has been proven to have significant advantages in throughput compared with the existing long-hop and short-hop strategies. Furthermore, a tradeoff was made for tractability of derivation and a stepwise-suboptimal routing strategy was provided. The stepwise-suboptimal routing strategy has a very close throughput performance to the stepwise-optimal routing strategy. It is suitable for the SG analysis framework and can provide analytical expressions for network performance metrics. Then, we compared the throughput and coverage performance of the above routing strategies in THz and RF networks. Finally, the proposed analytical framework and routing strategies were applied to system parameter design and UAV networks.


\appendices
\vspace{-2mm}\section{Proof of Proposition~\ref{prop1}} \vspace{-1mm}\label{app:prop1}
We first derive the result for SNR $\gg$ 1 then extend the result for SNR $\ll$ 1. Finally, we prove the uniqueness of the optimal power allocation strategy in both cases.

To obtain the power allocation strategy, we first establish the Lagrange function for the optimization problem $\mathscr{P}_1^Q$, where $Q = \{\rm{THz}, \rm{RF}\}$. When SNR $\gg$ 1,
\begin{IEEEeqnarray}{RCL}
    \mathcal{L} &=& \frac{-1}{\sum_{k=1}^{{K}_{Q}} \frac{1}{\overline{\tau}_{Q,k}\left(r_k,s_k\right)}} + \lambda \left( \sum_{k=1}^{{{K}_{Q}}} s_k - S_Q \right) \notag \\
    &\overset{(a)}{\approx}&  \frac{-B_Q }{\sum_{k=1}^{{K}_{Q}} \frac{1}{ \log_{2}\left(\overline{\rm{SNR}}_{Q,k}\left(r_k,s_k\right) \right)}} + \lambda \left( \sum_{k=1}^{{K}_{Q}} s_k - S_Q \right), \IEEEeqnarraynumspace
\end{IEEEeqnarray}
where $(a)$ is obtained by $\rm{SNR}+1 \approx \rm{SNR}$ for large SNR. In order to simplify the expression, we express $\overline{\rm{SNR}}_{Q,k}\left(r_k,s_k\right)$ as ${\rm{SNR}}_k$ only in Appendix~\ref{app:prop1}. Take the partial derivative with respect to $s_k$, 
\begin{IEEEeqnarray}{RCL}
    \frac{\partial \mathcal{L}}{\partial s_k} &=& \frac{B_Q}{\left( \sum_{k=1}^{{K}_{Q}} \frac{1}{ \log_{2}{\rm{SNR}}_k}\right)^2}  \frac{\partial }{\partial s_k} \left( \sum_{k=1}^{{K}_{Q}} \frac{1}{ \log_{2}{\rm{SNR}}_k}\right) \!+\! \lambda  \notag \\
    &=& \frac{B_Q}{\left( \sum_{k=1}^{{K}_{Q}} \frac{1}{ \log_{2}{\rm{SNR}}_k}\right)^2} \frac{\partial }{\partial s_k} \left( \frac{1}{ \log_{2}{\rm{SNR}}_k}\right) + \lambda. \label{appenA_2} \IEEEeqnarraynumspace
\end{IEEEeqnarray}
The partial derivative term in (\ref{appenA_2}) can be further simplified as follows,
\begin{IEEEeqnarray}{RCL}\label{appenA_3}
    \frac{\partial }{\partial s_k}  \frac{1}{ \log_{2}{\rm{SNR}}_k} &=&  \frac{-1}{\left( \log_{2}{\rm{SNR}}_k \right)^2}  \frac{ 1}{{\rm{SNR}}_k \ln2} \frac{\partial {\rm{SNR}}_k}{\partial s_k}  \notag \IEEEeqnarraynumspace
    \\
    &\overset{(b)}{=}& - \frac{1}{\left( \log_{2}{\rm{SNR}}_k \right)^2 s_k \ln2 },
\end{IEEEeqnarray}
where $(b)$ follows the equation that $\frac{\partial {\rm{SNR}}_k}{\partial s_k} = \frac{{\rm{SNR}}_k}{s_k}$. Substitute (\ref{appenA_3}) into (\ref{appenA_2}), and set $\frac{\partial \mathcal{L}}{\partial s_k}=0$, we get
\begin{IEEEeqnarray}{RCL}\label{appenA_4}
    \lambda &=& \frac{B_Q}{\left( \sum_{k=1}^{{K}_{Q}} \frac{1}{ \log_{2}{\rm{SNR}}_k}\right)^2} \frac{1}{\left( \log_{2}{\rm{SNR}}_k \right)^2 s_k \ln2 }, \ \forall k \IEEEeqnarraynumspace
\\ \label{appenA_5}
    &=& \frac{B_Q}{\left( \sum_{j=1}^{{K}_{Q}} \frac{1}{ \log_{2}{\rm{SNR}}_j}\right)^2}  \frac{1}{\left( \log_{2}{\rm{SNR}}_j \right)^2 s_j \ln2}, \ \forall j. \IEEEeqnarraynumspace
\end{IEEEeqnarray}
Consider that the first term in (\ref{appenA_4}) or (\ref{appenA_5}) does not change with $k$ or $j$, the result for large SNR can be given as,
\begin{equation}\label{appenA_6}
    \log_{2}{\rm{SNR}}_k \sqrt{s_k} = \log_{2}{\rm{SNR}}_j \sqrt{s_j}, \ \forall j,k.
\end{equation}

When SNR $\ll$ 1, $\log_2\left(1 + \rm{SNR} \right) \approx \rm{SNR}/\ln{2}$ holds, and
\begin{IEEEeqnarray}{RCL}
    \frac{\partial \mathcal{L}}{\partial s_k} & \approx& \frac{\partial }{\partial s_k} \left( - \frac{B_Q }{\sum_{k=1}^{{K}_{Q}} \frac{\ln{2}}{ { \rm{SNR}}_k }} + \lambda \left( \sum_{k=1}^{{{K}_{Q}}} s_k - S_Q \right) \right)  \notag\\
    & =& \frac{-B_Q \ln{2}}{\left(\sum_{k=1}^{{K}_{Q}} \frac{\ln{2}}{ { \rm{SNR}}_k } \right)^2  {\rm{SNR}}_k  s_k} + \lambda.
\end{IEEEeqnarray}
Taking the partial derivative and setting it equal to zero, we obtain
\begin{equation}
    \lambda = \frac{B_Q \ln{2}}{\left(\sum_{k=1}^{{K}_{Q}} \frac{\ln{2}}{ { \rm{SNR}}_k } \right)^2 \cdot {\rm{SNR}}_k \cdot s_k}, \ \forall k.
\end{equation}
Similar to the derivation of SNR $\gg$ 1, the result for SNR $\ll$ 1 can be obtained,
\begin{equation}
    {\rm{SNR}}_k s_k = {\rm{SNR}}_j s_j, \ \forall j,k.
\end{equation}

Finally, given a fixed $r_k$, both $\log_{2}{\rm{SNR}}_k \sqrt{s_k}$ and ${\rm{SNR}}_k s_k$ increase monotonically with $s_k$. When SNR $\gg$ 1, given that $s_1^*,s_2^*,\dots,s_{{K}_{Q}}^*$ is a set of the optimal power allocation strategy. The relationship in (\ref{appenA_6}) shows that when $s_k^*$ increases, all the other $s_j^*,\,j=1,2,\dots,{{K}_{Q}}$ will increase, resulting in the constraint (\ref{st:constraint1-1}) no longer being satisfied, and vice versa. Therefore, $s_1^*,s_2^*,\dots,s_{{K}_{Q}}^*$ is the unique the optimal power allocation strategy. The proof of uniqueness when SNR $\ll$ 1 is similar to that of SNR $\gg$ 1, therefore, is omitted here.

\vspace{-2mm}\section{Proof of Proposition~\ref{prop2}}\label{app:prop2}

When SNR $\gg$ 1, we define $\mathcal{C} =\sqrt{s_k^*} \log_2 \overline{\rm{SNR}}_k\left(r_k,s_k^*\right)$ and Proposition~\ref{prop1} demonstrates that $\mathcal{C}$ is a constant independent of $k$. Thus, the optimization objective can be approximated as follows:
\begin{IEEEeqnarray}{RCL}
    \overline{\tau}_Q^T &\approx& \frac{B_Q}{\sum_{k=1}^{{K}_{Q}} \frac{1}{\log_2 \left(\overline{\rm{SNR}}_k\left(r_k,s_k^*\right)\right)} } \notag\\
    &=& \frac{B_Q \, \mathcal{C}}{\sum_{k=1}^{{K}_{Q}} \sqrt{s_k^*}}.
\end{IEEEeqnarray}
Although the relationship between $\overline{\tau}_Q^T$ and $r_k$ is not explicit, both $s^*$ and $\mathcal{C}$ show symmetry for different $r_k$. Therefore, $\frac{\partial \overline{\tau}_Q^T}{\partial r_k}=\frac{\partial \overline{\tau}_Q^T}{\partial r_j}, \ \forall j,k$ and further we have $r_k^*=r_j^*,\ \forall j,k$. 
\par
Note that in Proposition~\ref{prop1}, when the given values of $r_1,r_2,\dots,r_{{K}_{Q}}$ are different, the values of $s_1^*,s_2^*,\dots,s_{{K}_{Q}}^*$ are unique and different. Interchanging the values of $s_k^*$ and $s_j^*$ will result in a not optimal power allocation strategy. In Proposition~\ref{prop2}, the power allocation strategy is adopted, so any change in $r_k$ leads to a corresponding change in $s_k^*$. Therefore, the symmetry property is applicable only in Proposition~\ref{prop2} and not in Proposition~\ref{prop1}.

When SNR $\ll$ 1, we define $\mathcal{C} =s_k^* \cdot \overline{\rm{SNR}}_k\left(r_k,s_k^*\right)$. The optimization objective can be approximated as follows
\begin{IEEEeqnarray}{RCL}
    \overline{\tau}_Q^T &\approx& \frac{B_Q}{\sum_{k=1}^{{K}_{Q}} \frac{\ln{2}}{\overline{\rm{SNR}}_k\left(r_k,s_k^*\right)} } \notag\\
    &=& \frac{B_Q \, \mathcal{C}}{ S_Q \ln{2}},
\end{IEEEeqnarray}
where $S_Q$ is the total transmission power. When power allocation strategy is employed, both $\overline{\tau}_{\rm{THz}}^T$ and $\overline{\tau}_{\rm{RF}}^T$ become constants that are independent of $r_k$.

\vspace{-2mm}\section{Proof of Proposition~\ref{prop3}}\label{app:prop3}
In what follows, we discuss the choices of the optimal number of hops in four cases. \\
\textit{Case 1:} For THz network, when SNR $\ll$ 1, the average throughput can be approximated as,
\begin{IEEEeqnarray}{RCL}  
\overline{\tau}_{\rm{THz}}^T & \approx&  \frac{B_{\rm{THz}} \overline{\rm{SNR}}_{{\rm{THz}},k}\left(\frac{R}{{K}_{\rm{THz}}},\frac{S}{{K}_{\rm{THz}}}\right)}{\ln{2}\cdot {{K}_{\rm{THz}}}}   \notag\\
& =&  \frac{B_{{\rm{THz}}}S \,G_{{\rm{THz}}} \, \eta _{{\rm{THz}}}}{ \ln{2}\cdot \sigma_{\rm{THz}}^2R^2}\exp\left({\frac{-\beta_{\rm{THz}} R}{{{K}_{\rm{THz}}}}}\right).
\end{IEEEeqnarray}
Therefore, we can conclude that as ${{K}_{\rm{THz}}}$ increases, the average throughput $\overline{\tau}_{\rm{THz}}^T$ increases.

\par

\textit{Case 2:} For THz network, when SNR $\gg$ 1,
\begin{IEEEeqnarray}{RCL} 
    &\overline{\tau}_{\rm{THz}}^T  \approx& \frac{B_{\rm{THz}} }{K_{\rm{THz}}}\log _2 \left( \overline{\rm{SNR}}_{\rm{THz},k}\left(\frac{R}{K_{\rm{THz}}},\frac{S}{K_{\rm{THz}}}\right) \right) \notag
    \\ & = \frac{B_{\rm{THz}} }{K_{\rm{THz}}}& \!\left(\! \log _2 \frac{S G_{{\rm{THz}}}  \eta _{{\rm{THz}}}}{\sigma_{\rm{THz}}^2R^2} \!-\! \frac{\beta_{\rm{THz}} R}{K_{\rm{THz}}\ln 2}\!+\!\log _2K_{\rm{THz}}\right ). \IEEEeqnarraynumspace 
\end{IEEEeqnarray}  
To facilitate the process of differentiation, we introduce a variable $x=\frac{1}{K_{\rm{THz}}}$. 
Next, we calculate the partial derivative of $\overline{\tau}_{\rm{THz}}$ with respect to $K_{\rm{THz}}$ and equate it to zero
\begin{IEEEeqnarray}{RCL} 
    \frac{\partial \overline{\tau}_{\rm{THz}}^T }{\partial K_{\rm{THz}}}  &=& \frac{\partial \overline{\tau}_{\rm{THz}}^T }{\partial x}\frac{\partial x }{\partial K_{\rm{THz}}}  \notag \\
   & \approx& \frac{B_{\rm{THz}}}{K_{\rm{THz}}^3 \ln2} \Big( K_{\rm{THz}}+2R\beta_{{\rm{THz}}} \notag \\ & &-  K_{\rm{THz}}\ln K_{\rm{THz}} 
   - K_{\rm{THz}}\ln \frac{S \,G_{{\rm{THz}}} \, \eta _{{\rm{THz}}}}{\sigma_{\rm{THz}}^2R^2}\Big),
\IEEEeqnarraynumspace 
\end{IEEEeqnarray} 
\begin{equation}
    \frac{2R\beta_{{\rm{THz}}}}{K_{\rm{THz}}} - \ln \frac{S \,G_{{\rm{THz}}} \, \eta _{{\rm{THz}}}}{\sigma_{\rm{THz}}^2R^2} + 1 = \ln K_{\rm{THz}}.
\end{equation}
$K_{\rm{THz}}$ can be obtained from the above transcendental equation. From the monotonically decreasing property of inverse proportional function and monotonically increasing property of logarithmic function, $K_{\rm{THz}}$ has a unique solution.

\textit{Case 3:} For RF network, when SNR $\ll$ 1, the average throughput can be approximated as,
\begin{IEEEeqnarray}{RCL}
        \overline{\tau}_{\rm{RF}}^T  & \approx& \frac{B_{\rm{RF}}\overline{\rm{SNR}}_{{\rm{RF}},k}\left(r_k,s_k\right)}{\ln{2} \cdot K_{\rm{RF}}} \notag \\
        &=& \frac{B_{\rm{RF}}}{\ln{2} \cdot K_{\rm{RF}}} \frac{S}{K_{\rm{RF}}} \frac{G_{{\rm{RF}}} \eta _{{\rm{RF}}} }{\sigma_{\rm{RF}}^2}\left ( \frac{K_{\rm{RF}}}{R} \right )^{\beta_{\rm{RF}}} \notag\\
        &=& \frac{B_{\rm{RF}} S G_{{\rm{RF}}} \eta _{{\rm{RF}}}K_{\rm{RF}}^{\beta_{\rm{RF}}-2}}{\ln{2} \cdot R^{\beta_{\rm{RF}}}\sigma_{\rm{RF}}^2},
\end{IEEEeqnarray}
This shows that when $\beta_{\rm{RF}} \geq 2$, increasing $K_{\rm{RF}}$ leads to an increase in the average throughput $\overline{\tau}_{\rm{RF}}^T$. It is worth noting that in numerous studies, RF devices are typically equipped with omnidirectional antennas. Furthermore, it is often assumed that $\beta_{\rm{RF}} = 2$ in an ideal channel environment, as achieving a path loss exponent lower than 2 can be challenging.

\textit{Case 4:} For THz network, when SNR $\gg$ 1, the average throughput can be approximated as
\begin{IEEEeqnarray}{RCL}
     &\overline{\tau}_{\rm{RF}}^T  \approx& \frac{B_{\rm{RF}} }{K_{\rm{RF}}}\log _2\overline{\rm{SNR}}_{\rm{RF},k}\left(r_k,s_k\right) \notag\\ 
     &= \frac{B_{\rm{RF}} }{K_{\rm{RF}}}& \log _2\frac{S G_{{\rm{RF}}}  \eta _{{\rm{RF}}} \left ( \frac{K_{\rm{RF}}}{R} \right )^{\beta_{\rm{RF}}} }{K_{\rm{RF}}\sigma_{\rm{RF}}^2}\notag\\ 
     &=\frac{B_{\rm{RF}} }{K_{\rm{RF}}}& \left ( \log _2\frac{SG_{{\rm{RF}}} \eta _{{\rm{RF}}}}{\sigma_{\rm{RF}}^2} \!-\! \log _2 K_{\rm{RF}}\!-\!\beta_{\rm{RF}}\log_2\frac{R}{K_{\rm{RF}}}\right ). \IEEEeqnarraynumspace 
\end{IEEEeqnarray}

By taking the partial derivative of $\overline{\tau}_{\rm{THz}}^T$ with respect to $K_{\rm{RF}}$ and equating it to zero, we obtain
\begin{IEEEeqnarray}{RCL}
     \frac{\partial \overline{\tau}_{\rm{RF}}^T }{\partial K_{\rm{RF}}}  &=& \frac{B_{\rm{RF}}}{K_{\rm{RF}}^2 \ln2}\Big(-1+\beta_{\rm{RF}}+\ln K_{\rm{RF}} \notag \\
    & &+\beta_{\rm{RF}}\ln\frac{R}{K_{\rm{RF}}}+\ln \frac{S \,G_{{\rm{RF}}} \, \eta _{{\rm{RF}}}}{\sigma_{\rm{RF}}^2}\Big),
\IEEEeqnarraynumspace 
\end{IEEEeqnarray}

\begin{equation}
    K_{\rm{RF}}\left ( \frac{R}{K_{\rm{RF}}} \right )^{\beta_{\rm{RF}}} \!= \!\exp\left ( 1\!-\! {\beta_{\rm{RF}}}\!+\! \ln \frac{S \,G_{{\rm{RF}}}  \eta _{{\rm{RF}}}}{\sigma_{\rm{RF}}^2}\right ).
\end{equation}
Due to the assumption $\beta_{\rm{RF}} \geq 2$, the equation for $K_{\rm{RF}}$ has a unique solution.

\vspace{-2mm}\section{Proof of Lemma~\ref{lemma1}}\label{app:lemma1}
We start with the \ac{CDF} of type \uppercase\expandafter{\romannumeral1} distance distribution, and by taking the derivative of the \ac{CDF}, we can obtain the corresponding \ac{PDF}.

\begin{figure}[ht]
\vspace{-0.2cm}
	\centering
	\includegraphics[width=0.8\linewidth]{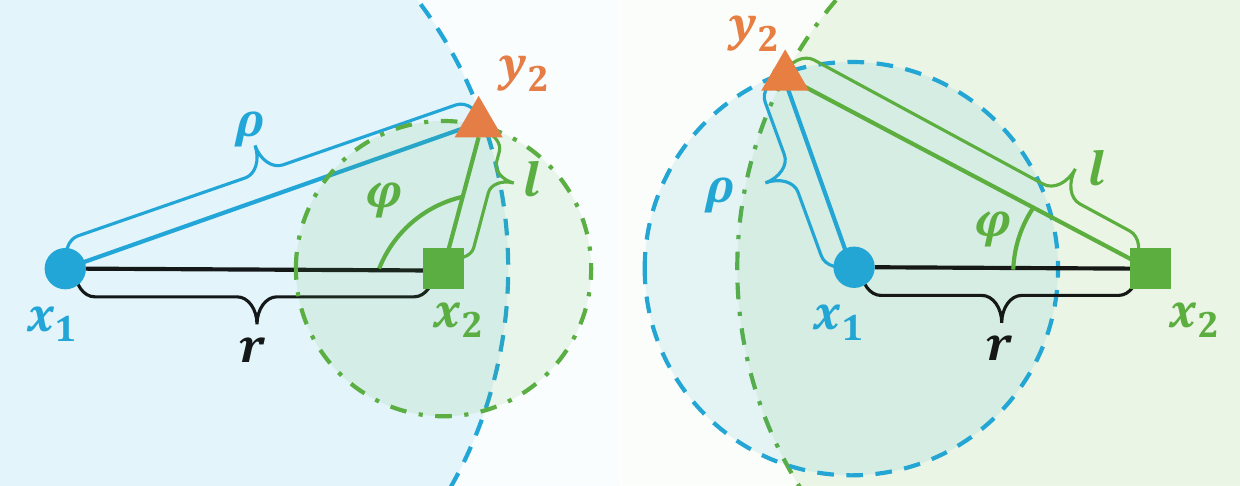}
	\caption{Explanatory figure of the type \uppercase\expandafter{\romannumeral1} distance distribution ($\rho > r$ in the left, and $\rho < r$ in the right).}
	\label{fig:figure1}
	\vspace{-0.2cm}
\end{figure}

As shown in Fig.~\ref{fig:figure1}, the center of the blue dash circle represents the source node or target node and is denoted as $x_1$. The center of the green dash-dot circle corresponds to the first or last ideal relay position in this case, and is denoted as $x_2$. The orange triangle $y_2$ that lies on the boundaries of both circles represents the selected relay node that is closest to $x_2$ among all the points from the PPP. The distance between $x_1$ and $x_2$ is denoted as $r$, which is a constant. The distances from $y_2$ to $x_1$ and $x_2$ are denoted as $\rho$ and $l$, respectively.

\par
From the definition, the CDF of type \uppercase\expandafter{\romannumeral1} distance distribution represents the probability that the selected relay $y_2$ is inside the blue dash circle. Therefore, the CDF can be obtained by integrating over the blue dash circle,
\begin{IEEEeqnarray}{RCL}\label{D-1}
    F_{Q}^{(1)} (\rho | r) = \int_{r-\rho}^{r+\rho} \frac{2 \varphi \left(\rho, r,l \right)}{2 \pi}  f_{Q}^l \left ( l \right ) \mathrm{d} l,
\end{IEEEeqnarray}
where $f_{Q}^l \left ( l \right )$ is the PDF of $l$, representing the probability that the distance between the selected point $y_2$ and $x_2$ is equal to $l$. Since $f_{Q}^l \left ( l \right )$ corresponds to the entire green dash-dot circle, and only a part of it is inside the blue dash circle, a weighting factor of ${2\varphi \left(\rho, r,l \right)}/{2\pi}$ is applied. Using the law of cosines for triangles, 
\begin{equation}\label{D-2}
    \varphi \left(\rho, r,l \right) = \arccos \left ( \frac{r^2+l^2-{\rho}^2}{2rl} \right ).
\end{equation}
\par
The \ac{CCDF} of $l$, denoted as $\overline{F}_{Q}^l \left ( l \right )$, is known as the null probability of PPP,
\begin{IEEEeqnarray}{RCL} \label{D-3}
    \overline{F}_{Q}^l \left ( l \right ) &=& \mathbb{P}\left [ {\rm{No\ relay\ closer\ than \ }}l \right ] \notag \\
    &=& \exp\left ( - \lambda _{Q} \pi l^2 \right ). \IEEEeqnarraynumspace 
\end{IEEEeqnarray}
We derive the PDF of $l$ by taking the derivative of CCDF, 
\begin{IEEEeqnarray}{RCL}\label{D-4}
        f_{Q}^l \left ( l \right ) &=&-\frac{{\mathrm{d}} \overline{F}_{Q}^l \left(l\right)}{{\mathrm{d}} l} \notag \\
        &=& 2 \pi l\lambda _{Q}  \exp\left ( - \lambda _{Q} \pi l^2 \right ).
\end{IEEEeqnarray}
Substituting (\ref{D-2}) and (\ref{D-4}) into (\ref{D-1}), the CDF of type \uppercase\expandafter{\romannumeral1} distance distribution is expressed as follows,
\begin{IEEEeqnarray}{RCL}\label{D-5}
    F_{Q}^{(1)} (\rho | r) &=& \int_{r-\rho}^{r+\rho} 2 \arccos \left ( \frac{r^2+l^2-{\rho}^2}{2rl} \right )  \notag\\
    & &\times  \lambda_Q \exp\left ( - \lambda _{Q} \pi l^2 \right ) |l| \mathrm{d} l.
\end{IEEEeqnarray}
Note that the inequality $r-\rho<0$ may hold, and $l$ might be negative. Therefore, we discuss the derivation in two conditions according to whether $\rho$ is greater than $r$, which are shown in the left and right parts of Fig.~\ref{fig:figure1} respectively. The final expressions are almost consistent in both conditions, with the addition of an absolute value sign to the last term in the integral for $l$.

\par
Finally, the PDF of type \uppercase\expandafter{\romannumeral1} distance distribution is obtained by differentiating the CDF using Leibniz's rule,
\begin{IEEEeqnarray}{RCL}  \label{D-6}
    f_{Q}^{(1)} (\rho | r) &=& \frac{{\mathrm{d}} F_{Q}^{(1)} (\rho | r) }{{\mathrm{d}} \rho} \notag \\
    &=& \int_{r-\rho}^{r+\rho} \frac{2 \lambda_Q \rho \exp \left( - \lambda_Q \pi l^2 \right) } {r \sqrt{ 1 - \frac{\left(r^2 + l^2 - \rho^2  \right)^2 }{4 r^2 l^2} }} \mathrm{d} l \notag \\
    & &+ 2 \arccos \left ( \frac{r^2+(r+\rho)^2-{\rho}^2}{2r(r+\rho)} \right )  \notag \\ 
    & & \times \lambda_Q \exp\left ( - \lambda _{Q} \pi (r+\rho)^2 \right ) (r+\rho) \notag \\
    & &- 2 \arccos \left ( \frac{r^2+(r-\rho)^2-{\rho}^2}{2r(r-\rho)} \right )  \notag \\ & & \times  \lambda_Q \exp\left ( - \lambda _{Q} \pi (r-\rho)^2 \right ) |r-\rho|  \notag  \\
    &=&\int_{r-\rho}^{r+\rho} \frac{2 \lambda_Q \rho \exp \left( - \lambda_Q \pi l^2 \right)} {r \sqrt{ 1 - \frac{\left(r^2 + l^2 - \rho^2  \right)^2 }{4 r^2 l^2} }} \mathrm{d} l.
\end{IEEEeqnarray}
Note that $\arccos \left ( \frac{r^2+(r+\rho)^2-{\rho}^2}{2r(r+\rho)} \right )$$ = \arccos \left ( \frac{r^2+(r-\rho)^2-{\rho}^2}{2r(r-\rho)} \right ) $$= 0$. In addition, the improper integral will appear when $l=0$, and the integrand function is still finite after taking the limit, so the final result will not be affected.

\vspace{-2mm}\section{Proof of Lemma~\ref{lemma2}}\label{app:lemma2}
In the following, we derive $f_{Q}^{(2)} (\rho | r)$ from $f_{Q}^{(1)} (\rho | r)$. As shown in Fig.~\ref{fig:figure2}, the purple diamond $y_1$ is the selected relay that is closest to $x_1$ among all the points in the PPP. In this context, $x_1$, $x_2$ and $y_2$ have the same definitions as provided in Appendix~\ref{app:lemma1}. 
\par

\begin{figure}[ht]
\vspace{-0.2cm}
	\centering
	\includegraphics[width=0.8\linewidth]{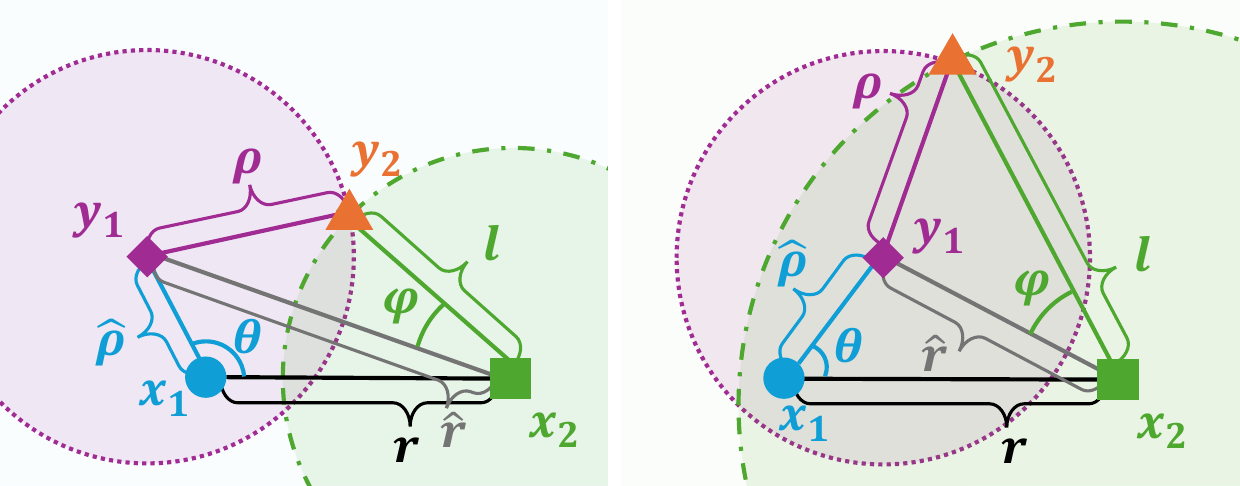}
	\caption{Explanatory figure of the type \uppercase\expandafter{\romannumeral2} distance distribution ($\rho > \widehat{r}$ in the left, and $\rho < \widehat{r}$ in the right).}
	\label{fig:figure2}
	\vspace{-0.2cm}
\end{figure}

In Appendix~\ref{app:lemma1}, the derivation of the type \uppercase\expandafter{\romannumeral1} distance distribution, which represents the distribution of the distance between $x_1$ and $y_2$, has been provided. Due to the homogeneity of the PPP, the symmetry of the distribution indicates that the distance between $y_1$ and $x_2$ also follows type \uppercase\expandafter{\romannumeral1} distance distribution. 
Furthermore, $x_1$ can be regarded as being located at the same position as $y_1$ in type \uppercase\expandafter{\romannumeral1} distance distribution. However, the distance between $x_1$ and $y_1$ forms the distribution given in (\ref{D-4}) for type \uppercase\expandafter{\romannumeral2} distance distribution. The probability of $y_1$ is $r$ away from $x_1$ is ${f_Q^l (\widehat{\rho})}/{2 \pi \widehat{\rho}}$. Therefore, the type \uppercase\expandafter{\romannumeral2} distance distribution $F_{Q}^{(2)}(\rho|r)$ can be derived by traversing the whole plane with $ \widehat{\rho}$ as the integrate variable,
\begin{equation}\label{E-1}
\begin{split}
   F_{Q}^{(2)}(\rho|r) =\int_{0}^{\infty }\int_{0}^{2\pi} \frac{f_Q^l (\widehat{\rho})}{2 \pi \widehat{\rho}} F_Q^{(1)}\left( \widehat{r}\left(  \widehat{\rho},\theta \right) | r \right) \mathrm{d} \theta \mathrm{d} \widehat{\rho}, 
\end{split}
\end{equation}
where $\widehat{r}\left( \widehat{\rho},\theta \right)$ is obtained by the law of cosines for triangles, 
\begin{equation}\label{E-2}
    \widehat{r}\left(\widehat{\rho},\theta \right) =\sqrt{r^2+\hat{\rho}^2-2r \hat{\rho} \cos\theta}.
\end{equation}
Substituting (\ref{D-4}), (\ref{D-5}) and (\ref{E-2}) into (\ref{E-1}), the type \uppercase\expandafter{\romannumeral2} distance distribution $F_{Q}^{(2)}(\rho|r)$ is given,
\begin{IEEEeqnarray}{RCL}\label{E-3}
   F_{Q}^{(2)}(\rho|r) &=& \int_{0}^{\infty }\int_{0}^{2\pi}\lambda _{Q}  \exp\left ( - \lambda _{Q} \pi {\hat{\rho}}^2 \right ) \notag \\ 
   & & \times \int_{\widehat{r}-{\rho}}^{\widehat{r}+{\rho}}2\arccos \left ( \frac{\widehat{r}^2+l^2-{\rho}^2}{2\widehat{r}l} \right ) \notag  \\ 
   & &\times l \cdot  \lambda_Q \exp\left ( - \lambda _{Q} \pi l^2 \right ) \mathrm{d} l \mathrm{d} \theta \mathrm{d} \hat{\rho}. 
\end{IEEEeqnarray}
Finally, the PDF of type \uppercase\expandafter{\romannumeral2} distance distribution is obtained by differentiating the CDF using Leibniz's rule,
\begin{IEEEeqnarray}{RCL}\label{E-4}
    f_{Q}^{(2)} (\rho | r) &=& \frac{{\mathrm{d}} F_{Q}^{(2)} (\rho | r) }{{\mathrm{d}} \rho} \notag \\ 
    &=& \int_{0}^{\infty }\int_{0}^{2\pi}\lambda _{Q}  \exp\left ( - \lambda _{Q} \pi {\hat{\rho}}^2 \right ) \notag \\ 
    && \times \int_{\widehat{r}-\rho}^{\widehat{r}+\rho} \frac{2 \lambda_Q \rho \exp \left( - \lambda_Q \pi l^2 \right) } {\widehat{r} \sqrt{ 1 - \frac{\left(\widehat{r}^2 + l^2 - \rho^2  \right)^2 }{4 \widehat{r}^2 l^2} }} \mathrm{d} l \mathrm{d} \theta \mathrm{d} \hat{\rho}. \IEEEeqnarraynumspace
\end{IEEEeqnarray}

\vspace{-2mm}\section{Proof of Theorem~\ref{theorem2}}\label{app:theorem2}
In the case where each hop has the same transmission power and experiences the same small-scale fading, the only factor that leads to different throughput values is the variation in distance distributions. Based on Lemma~\ref{lemma1} and Lemma~\ref{lemma2}, it is observed that $\tau_{{\rm{THz}},1}=\tau_{{\rm{THz}},K_{\rm{THz}}}$, and $\tau_{{\rm{THz}},2}=\tau_{{\rm{THz}},3}=\dots=\tau_{{\rm{THz}},K_{\rm{THz}}-1}$ for any other $k \neq 1,K_{\rm{THz}}$. From the definition given in (\ref{total tau}), the total throughput of THz networks is,
\begin{IEEEeqnarray}{RCL} \label{F-1}
    \tau_{\rm{THz}}^T &=& \frac{1}{\frac{2}{\tau_{{\rm{THz}},1}}+\frac{K_{\rm{THz}}-2}{\tau_{{\rm{THz}},2}}} \notag \\
     &=& \frac{{\tau_{{\rm{THz}},1}}\cdot {\tau_{{\rm{THz}},2}}}{2{\tau_{{\rm{THz}},2}}+\left( K_{\rm{THz}}-2 \right){\tau_{{\rm{THz}},1}}}.
\end{IEEEeqnarray}
\par
For a non-negative variable $X$, we can calculate the expectation of it by $\mathbb{E}[X] = \int_{0}^{\infty }\overline{F}(x)\mathrm{d} x$, where $ \overline{F}(x)$ is the \ac{CCDF} of $X$.
Therefore, $\tau_{{\rm{THz}},k}$, can be calculated by,
\begin{IEEEeqnarray}{RCL}\label{F-2} 
    \tau_{{\rm{THz}},k} & =& \int_{0}^{\infty }\overline{F}_{\tau_{{\rm{THz}},k}}\left(t\right)\mathrm{d} t \notag \\
     & =& \int_{0}^{\infty} \mathbb{P} \left[ B_Q \log_2 (1+{\rm{SNR}}_{{\rm{THz}},k}) > t \right ]\mathrm{d} t \notag \\
    & =& \int_{0}^{\infty }  \int_{0}^{\infty } f_{{\rm{THz}}}^{(k)}\left(\rho \bigg| \frac{R}{K_{\rm{THz}}} \right) \notag \\ 
    && \times \mathbb{P} \left [ {\rm{SNR}}_{{\rm{THz}},k} > 2^{\frac{t}{{B_{\rm{THz}}}}}-1  \Big| r = l\right ] \mathrm{d} \rho \mathrm{d} t,
    \end{IEEEeqnarray}
where $k=1,2$, $f_{{\rm{THz}}}^{(1)}\left(\rho \big| \frac{R}{K_{\rm{THz}}} \right)$, $f_{{\rm{THz}}}^{(2)}\left(\rho \big| \frac{R}{K_{\rm{THz}}} \right)$ are defined in (\ref{PDF_1}) and (\ref{PDF_2}). Furthermore, the conditional probability in (\ref{F-2}) is calculated by,
\begin{IEEEeqnarray}{RCL}\label{F-3}
     &\!\!\mathbb{P} & \left [ {\rm{SNR}}_{{\rm{THz}},k} > 2^{\frac{t}{{B_{\rm{THz}}}}}-1  \Big| r = \rho \right ] \notag \\ 
    & =& \mathbb{P}  \left [ \mathcal{X}_{\rm{THz}} >   \frac{ \left(2^{\frac{t}{{B_{\rm{THz}}}}}-1\right) \rho^2 \exp\left(\beta_{\rm{THz}} \rho \right) \sigma_{\rm{THz}}^2 }  {\frac{S}{K_{\rm{THz}}} G_{{\rm{THz}}} \eta_{{\rm{THz}}}}  \right ] \notag \\
    &\overset{(a)}{=}& \frac{\Gamma \left(\mu, \mu \, \left( \left(2^{\frac{t}{{B_{\rm{THz}}}}}-1\right) \frac{ \rho^2 \exp\left(\beta_{\rm{THz}} \rho \right) \sigma_{\rm{THz}}^2 }  {\frac{S}{K_{\rm{THz}}} G_{{\rm{THz}}} \eta_{{\rm{THz}}}} \right)^{\frac{\alpha}{2}} \right)}{\Gamma\left(\mu\right)}, \IEEEeqnarraynumspace 
    \end{IEEEeqnarray}
where (a) follows the definition of the \ac{CCDF} of $\mathcal{X}_{\rm{THz}}$. Substitute (\ref{F-2}) into (\ref{F-3}), 
\begin{IEEEeqnarray}{RCL}\label{F-4}
    &\tau_{{\rm THz},k} &= \int_{0}^{\infty }\int_{0}^{\infty} \frac{f_{{\rm{THz}}}^{(k)}\left(\rho \big| \frac{R}{K_{\rm{THz}}} \right)}{\Gamma\left(\mu\right)}  \notag \\
    \!\!\!&\!\!\!\times\!\!&\!\!\!\!\! {\Gamma\! \left(\!\mu, \mu  \left(  \frac{(2^{\frac{t}{B_{\rm THz}}}\!-\!1) \rho^2 \!\exp\left(\beta_{\rm{THz}} \rho \right) \sigma_{\rm{THz}}^2 }  {\frac{S}{K_{\rm{THz}}} G_{{\rm{THz}}} \eta_{{\rm{THz}}}} \right)^{\!\frac{\alpha}{2}} \!\right)}\! {\rm d} \rho {\rm d} t. \IEEEeqnarraynumspace 
\end{IEEEeqnarray}
Finally, by substituting equation (\ref{F-4}) into equation (\ref{F-1}), the proof of Theorem~\ref{theorem2} is completed.

\vspace{-2mm}\section{Proof of Theorem~\ref{theorem3}}\label{app:theorem3}
The initial steps of the proof for RF networks follow the same logic as for THz networks, and thus, they are omitted here. The conditional probability for RF networks, which corresponds to (\ref{F-3}) in the case of THz networks, can be calculated by
\begin{IEEEeqnarray}{RCL}    
    &\mathbb{P}& \left [ {\rm{SNR}}_{{\rm{RF}},k} > 2^{\frac{t}{{B_{\rm{RF}}}}}-1  \Big| r = \rho \right ] \notag \\
    &=&
    \mathbb{P} \left [ \mathcal{X}_{\rm{RF}} >  {\left(2^{\frac{t}{{B_{\rm{RF}}}}}-1\right) \frac{K_{\rm{RF}}}{S} \rho^{\beta_{\rm{RF}}}  \sigma_{\rm{RF}}^2 }  { G_{{\rm{THz}}}^{-1} \eta_{{\rm{THz}}}^{-1}}  \right ] \notag \IEEEeqnarraynumspace  \\
    & \overset{(a)}{=}& \exp\left( -{\left(2^{\frac{t}{{B_{\rm{RF}}}}}-1\right) \frac{K_{\rm{RF}}}{S} \rho^{\beta_{\rm{RF}}}  \sigma_{\rm{RF}}^2 } { G_{{\rm{THz}}}^{-1} \eta_{{\rm{THz}}}^{-1}} \right),
\end{IEEEeqnarray}
where (a) follows the definition of the \ac{CCDF} of $\mathcal{X}_{\rm{RF}}$. Therefore, the single-hop throughput $\tau_{{\rm RF},k}$, where $k=1,2$, is given by,
\begin{IEEEeqnarray}{RCL}   
& \tau_{{\rm RF},k} & = \int_{0}^{\infty }\int_{0}^{\infty} f_{{\rm{RF}}}^{(k)}\left(\rho \bigg| \frac{R}{K_{\rm{RF}}} \right)  \notag \\  &\times \exp & \left( -{(2^{\frac{t}{B_{\rm RF}}}\!-\!1) \frac{K_{\rm{RF}}}{S} \rho^{\beta_{\rm{RF}}}  \sigma_{\rm{RF}}^2 }  { G_{{\rm{RF}}}^{-1} \eta_{{\rm{RF}}}^{-1}} \right) {\rm d} \rho {\rm d} t. \IEEEeqnarraynumspace  
\end{IEEEeqnarray}

\vspace{-2mm}\section{Proof of Theorem~\ref{theorem4}}\label{app:theorem4}
As the hops are assumed to be independent, the coverage probability in multi-hop routing can be calculated using the following 
\begin{small}
\begin{IEEEeqnarray}{RCL} 
    &P_{\rm{THz}}^C & = \prod_{k=1}^{K_{\rm{THz}}} \mathbb{E}_\rho \left[ \mathbb{P}\left [ {\rm SNR}_{{\rm{THz}},k}> \gamma_{\rm{THz}} | r=\rho \right ] \right]  \notag \\
    &=& \!\!\!\! \left ( \int_{0}^{\infty} \!\! \!\! f_{{\rm{THz}}}^{(1)} \! \left(\! \rho \bigg| \frac{R}{K_{\rm{THz}}} \right)  \! \mathbb{P}\! \left [ {\rm SNR}_{{\rm{THz}}} \left(\!\rho, \frac{S}{{K_{\rm{THz}}}} \! \right) \!\!>\!\! \gamma_{\rm{THz}} \!\right ]\! {\rm d} \rho \right )^2  \notag \\
    &&\!\!\!\!\!\!\!\!\! \times \!\! \left ( \int_{0}^{\infty} \!\!\!\! f_{{\rm{THz}}}^{(2)}\!\left(\!\rho \bigg| \frac{R}{K_{\rm{THz}}} \right)  \! \mathbb{P}\! \left[ \!  {\rm SNR}_{{\rm{THz}}} \! \left( \! \rho, \frac{S}{{K_{\rm{THz}}}} \right) \!\!>\!\! \gamma_{\rm{THz}} \!\right ] \! {\rm d} \rho \!\right )^{\!\!K_{\rm{THz}}\!-\!2}  \notag \\
    & \overset{(a)}{=}& \!\!\!\! \left ( \int_{0}^{\infty}   \! \frac{f_{{\rm{THz}}}^{(1)}\left(\rho \big| \frac{R}{K_{\rm{THz}}} \right)}{\Gamma\left(\mu\right)} {\Gamma \left(\mu, \mu  \left( \mathcal{C}(\rho) \right)^{\frac{\alpha}{2}} \right)} {\rm d} \rho\right )^2  \notag \\
    &&\!\!\!\!\!\!\!\!\! \times \!\! \left (\int_{0}^{\infty}  \! \frac{f_{{\rm{THz}}}^{(2)}\left(\rho \big| \frac{R}{K_{\rm{THz}}} \right)}{\Gamma\left(\mu\right)} {\Gamma \left(\mu, \mu \left(  \mathcal{C}(\rho)  \right)^{\frac{\alpha}{2}} \right)} {\rm d} \rho\right )^{K_{\rm{THz}}-2}, \IEEEeqnarraynumspace
\end{IEEEeqnarray}
\end{small}where $\mathcal{C}(\rho) = \frac{K_{\rm{THz}} \gamma_{\rm{THz}} \rho^2 \exp\left(\beta_{\rm{THz}} \rho \right) \sigma_{\rm{THz}}^2 }  {S G_{{\rm{THz}}} \eta_{{\rm{THz}}}} $ is a function based on $ \rho $, and the proof of step (a) is similar to that of (\ref{F-3}), therefore, it is omitted here.

\bibliographystyle{IEEEtran}
\bibliography{references}

\end{document}